\documentclass[11pt,times]{article}

\usepackage[page-ref=none]{acro}

\usepackage{algorithm}
\usepackage{algorithmic}

\usepackage[sumlimits]{amsmath}
\usepackage{amssymb}
\usepackage{amsthm}

\usepackage{booktabs}

\usepackage{color}
\usepackage{enumitem}

\usepackage[mathscr]{eucal}
\usepackage{fancybox}
\usepackage{float}

\usepackage[text={6.5in,9.5in},centering]{geometry}

\usepackage{graphics}
\usepackage{graphicx}

\usepackage[hidelinks]{hyperref}

\usepackage{ifpdf}
\usepackage{ifthen}
\usepackage{lineno}
\usepackage{listings}
\usepackage{makecell}

\usepackage{multicol}
\usepackage{multirow}

\usepackage{pgfplots}

\usepackage{stackengine}

\usepackage{stmaryrd}
\usepackage{subfig}

\usepackage{tikz}
\usetikzlibrary{arrows}
\usetikzlibrary{fadings}

\usepackage{tikz-cd}

\usepackage{txfonts}    

\usepackage{todonotes}

\usepackage{url}
\usepackage{varioref}
\usepackage{xcolor}

\DeclareAcronym{AMP}{
  short = AMP ,
  long  = asymmetric multicore processor
}

\DeclareAcronym{BCDomain}{
  short = bc-domain ,
  long  = bounded-complete domain
}

\DeclareAcronym{BVP}{
  short = BVP ,
  long  = boundary value problem
}

\DeclareAcronym{CLEVER}{
  short = \textsc{Clever} ,
  long  = Cross Lipschitz Extreme Value for nEtwork Robustness
}

\DeclareAcronym{CPO}{
  short = cpo ,
  long  = complete partial order
}

\DeclareAcronym{CPPO}{
  short = cppo ,
  long  = complete pointed partial order
}

\DeclareAcronym{CPS}{
  short = CPS ,
  long  = cyber-physical system
}

\DeclareAcronym{DAG}{
  short = DAG ,
  long  = directed acyclic graph
}

\DeclareAcronym{DBF}{
  short = DBF ,
  long  = Digital Beam Former
}

\DeclareAcronym{DCPO}{
  short = dcpo ,
  long  = directed-complete partial order
}

\DeclareAcronym{DSL}{
  short = DSL ,
  long  = domain-specific language
}

\DeclareAcronym{EDSL}{
  short = EDSL ,
  long  = embedded domain-specific language
}

\DeclareAcronym{FBP}{
  short = FBP ,
  long  = free boundary problem
}

\DeclareAcronym{FivePs}{
  short = 5Ps ,
  long  = five Ps
}

\DeclareAcronym{glb}{
  short = glb ,
  long  = greatest lower bound
}

\DeclareAcronym{HS}{
  short = HS ,
  long  = hybrid system
}

\DeclareAcronym{ISA}{
  short = ISA ,
  long  = instruction set architecture
}

\DeclareAcronym{IVP}{
  short = IVP ,
  long  = initial value problem
}

\DeclareAcronym{lub}{
  short = lub ,
  long  = least upper bound
}

\DeclareAcronym{ODE}{
  short = ODE ,
  long  = ordinary differential equation
}

\DeclareAcronym{PC}{
  short = PC ,
  long  = Pulse Compression
}

\DeclareAcronym{PCROP}{
  short = PC-ROP ,
  long  = PDE-constrained rearrangement optimization problem
}

\DeclareAcronym{PDCPO}{
  short = pointed dcpo ,
  long  = pointed directed-complete partial order
}

\DeclareAcronym{PDE}{
  short = PDE ,
  long  = partial differential equation
}

\DeclareAcronym{PI}{
  short = PI ,
  long  = principal investigator
}

\DeclareAcronym{POSET}{
  short = poset ,
  long  = partially ordered set
}

\DeclareAcronym{RE}{
  short = r.~e. ,
  long  = recursively enumerable
}

\DeclareAcronym{RNN}{
  short = RNN ,
  long  = recurrent neural network
}

\DeclareAcronym{ROP}{
  short = ROP ,
  long  = rearrangement optimization problem
}

\DeclareAcronym{SAC}{
  short = \textsc{Sac} ,
  long  = Single Assignment C
}

\DeclareAcronym{SBSE}{
  short = SBSE ,
  long  = search-based software engineerign
}

\DeclareAcronym{SOP}{
  short = SOP ,
  long  = shape optimization problem
}

\DeclareAcronym{TTE}{
  short = TTE ,
  long  = Type-II Theory of Effectivity
}

\newcommand{\dperp}{\mathrel{\bot\!\!\!\bot}}
\newcommand{\interval}{\mathbb{I}}
\newcommand{\nat}{\mathbb{N}}

\newcommand{\cpo}[1][s]{\ifthenelse{\equal{#1}{s}}{complete partial
order}{Complete partial order}}

\newcommand{\fpoint}[1][s]{\ifthenelse{\equal{#1}{c}}{Floating-point}{floating-point}}

\newcommand{\kc}[1][s]{\ifthenelse{\equal{#1}{s}}{Kolmogorov
complexity}{Kolmogorov complexities}} 

\newcommand{\ptime}[1][s]{\ifthenelse{\equal{#1}{c}}{Polynomial-time}{polynomial-time}}
\newcommand{\pspace}[1][s]{\ifthenelse{\equal{#1}{c}}{Polynomial-space}{polynomial-space}}

\newcommand{\NFL}[1][c]{\ifthenelse{\equal{#1}{s}}{no free lunch}{No
Free Lunch}}

\newcommand{\sequence}[5][i]{\ifthenelse{\equal{#1}{i}}{\ensuremath{\left< {#2}_{#3}, {#2}_{#4},
\ldots,{#2}_{#5}, \ldots \right>}}{\ensuremath{\left< {#2}_{#3}, {#2}_{#4},
\ldots,{#2}_{#5} \right>}}}

\theoremstyle{plain} 

\newtheorem{definition}{Definition}[section]

\newtheorem{corollary}[definition]{Corollary}
\newtheorem{example}[definition]{Example}
\newtheorem{lemma}[definition]{Lemma}

\newtheorem{proposition}[definition]{Proposition}

\newtheorem{theorem}[definition]{Theorem}

\newcounter{defenumalph}

\newcounter{defenum}

\newcounter{saveeqn}

\newcommand{\arrayoptions}[2]{\setlength{\arraycolsep}{#1}\renewcommand{\arraystretch}{#2}}

\DeclareFontFamily{OT1}{pzc}{}
\DeclareFontShape{OT1}{pzc}{m}{it}{<->s*[1.30]pzcmi7t}{}
\DeclareMathAlphabet{\mathpzc}{OT1}{pzc}{m}{it}

\newcounter{saveenum} 

\title{\textbf{A Domain-Theoretic Foundation for Imprecise Probability and Credal Sets}}

\author{Abbas Edalat\thanks{Department of Computing, Imperial College
    London, London, United Kingdom, Email:
    \href{mailto:a.edalat@imperial.ac.uk}
    {\texttt{a.edalat@imperial.ac.uk}}} \and
  Pietro Di Gianantonio\thanks{Department of Mathematics and
    Informatics, University of Udine, Udine, Italy,
    Email:\href{mailto:pietro.digianantonio@uniud.it}
    {\texttt{pietro.digianantonio@uniud.it}}} \and
    Amin Farjudian\thanks{School of Mathematics, University of
    Birmingham, Birmingham, United Kingdom, Email:
    \href{mailto:A.Farjudian@bham.ac.uk}
    {\texttt{A.Farjudian@bham.ac.uk}}}}

\date{}

\begin{document}

\maketitle

\begin{abstract}
  We develop a domain-theoretic framework for imprecise probability
  reasoning and inference on general topological spaces with a
  countably based continuous lattice of open sets. We address two
  distinct forms of uncertainty: partial or incomplete event
  descriptions, and sets of probability distributions as represented
  by credal sets---as well as their combination. Within this
  framework, we construct a theory of conditional probability and
  derive novel inference rules for performing Bayesian updating in the
  presence of these two complementary types of imprecision. These
  results are extended to a theory of conditional independence for
  imprecise probabilistic events. We also formulate logical predicates for conditional probability, Bayesian updating, and conditional independence, and we obtain the relevant soundness and completeness results. A key contribution is the
  construction of a Scott-continuous mapping from any credal set to
  the domain of intervals, providing a domain-theoretic realisation of
  classical results from capacity theory and Choquet
  integration. Finally, we introduce and study a new family of credal
  sets generated by iterated function systems with imprecise
  probability weights, broadening the scope of computationally
  tractable imprecise probabilistic models. The resulting computable
  framework unifies logical, topological, and measure-theoretic
  perspectives on uncertainty, supporting robust probabilistic
  inference under partial and set-valued information.
\end{abstract}

\textbf{Keywords:} Domain Theory, Conditional Probability, Credal Sets,
  Conditional Independence

\section{Introduction}

Imprecise probability provides a robust framework for reasoning under uncertainty when information is partial, ambiguous, or set-valued. It generalises classical probability by allowing sets of distributions (credal sets) and interval-valued probabilities, enabling more cautious inferences in safety-critical applications.

We consider second-countable locally compact sober topological spaces.  We refer to such a space as a {\em basic} topological space. In these spaces the lattice of open sets is a countably based continuous lattice~\cite{gierz2003continuous}, representing a spatial locale with a countable basis. These basic spaces include separable locally compact metric spaces as well as countably based continuous domains. Moreover, any Polish space is the space of maximal elements of its continuous domain of formal balls~\cite{edalat1998computational}. This means that basic topological spaces cover all the standard spaces used in probability theory.

In addition, any continuous probability valuation on such a space extends to a Borel measure\cite{alvarez2000extension,Keimel_Lawson}. For Hausdorff spaces, the extension is unique by outer regularity of the resulting Borel measure. 

Let $D$ be a basic space with its lattice of open sets $\mathcal{O}_D$. We consider a pair $(O_1,O_2)\in \mathcal{O}_D\times \mathcal{O}_D$ of disjoint open sets of $D$ as an observational event where $O_2$ is considered as the current information about the exterior of $O_1$. We define the {\em domain of events} of $D$---denoted by $\mathbb{E}(D)$---be the set of disjoint pairs of open sets ordered componentwise by subset inclusion. This gives a bounded complete domain that had been considered as a computable framework for solid modelling~\cite{edalat2002foundation}.

On disjoint events, it is possible also to define an interval notion of probability of events. 
Let $ \interval[0,1]$ be the domain of non-empty closed intervals of the unit interval ordered by reverse inclusion. 

\begin{definition}\label{int-prob)}
 Given a probability distribution $\sigma$ on $D$, the {\em interval probability map} on the space of disjoint event-pair is given by $\operatorname{Pr}_\sigma: \mathbb{E}( D) \to \interval[0,1]$ with:
 \begin{equation}\label{defining-int-prob}
 \operatorname{Pr}_\sigma((O_1,O_2))=[\sigma(O_1),1-\sigma(O_2)].   
 \end{equation} 
\end{definition}

It is easily seen that $\operatorname{Pr}_\sigma$ is Scott continuous. Given an event-pair $(O_1,O_2)$, the real interval $\operatorname{Pr}_\sigma((O_1,O_2))\in \interval[0,1]$ captures the minimum and maximum probability $\sigma(O_1)$ as the event-pair is refined in the information order.

In the following sections, we develop a domain-theoretic foundation for conditional probability, Bayesian updating, and conditional independence in this imprecise setting. A key result is the Scott-continuous envelope map from credal sets to interval probabilities, which lifts capacity-theoretic ideas and Choquet integration into a domain framework~\cite{Choquet1954,grabisch2016set,Augustin2014,Goubault2013}. We also introduce a new family of credal sets from iterated function systems with imprecise weights.

While interval arithmetic has been applied to Bayes' rule in engineering contexts \cite{ferson2003experimental}, 
and robust Bayesian analysis considers sets of priors \cite{berger1985statistical}, 
to our knowledge the \emph{monotonicity-based derivation} of exact endpoint formulas 
has not been previously published. We prove (Lemma~\ref{interval-bayes}) that the classical Bayesian update 
\[
f_b(P(H), P(E|H), P(E|\neg H)\bigr) = \frac{P(H)P(E|H)}{P(H)P(E|H) + P(E|\neg H)(1-P(H))}
\]
is monotone increasing in $P(H)$ and $P(E|H)$ and decreasing in $P(E|\neg H)$ (where $P$, $H$ and $E$ stand for Probability, Hypothesis and Evidence respectively). This yields the sharp interval extension formula in Corollary~\ref{def-int-bayes-rule} and forms the foundation for our 
 inference rules (B1)--(B4) and their completeness proof (Theorem~\ref{bayes-complete}). 

 Since the lattice of open sets of a basic space is countably based and continuous, it can be given an effective structure so that computable opens and computable functions on the lattice can be enumerated; see~\cite{plotkin1981post,smyth1977effectively}. This leads to a computable framework for imprecise probabilities and credal sets. 
\subsubsection*{Notational Convention} We use $D$ for any basic space, Hausdorff or domain. When we specifically, deal only with a Hausdorff basic space, we denote it by $X$ rather than $D$.

\section{Domain theoretic basics}

Recall that a continuous probability valuation $\sigma$ on a complete
lattice $L$ is a Scott continuous map $\sigma:L\to [0,1]$, with the
modularity property
\[\sigma(a)+\sigma(b)=\sigma(a\vee b)+\sigma(a\wedge b).\]
We consider probability continuous valuations $\sigma$ on the lattice
of open subsets of a basic space $D$. The weak topology on $P(D)$ is
the weakest topology for which the evaluation maps
$\operatorname{ev}_O:P(D)\to ([0,1],\leq)$ are continuous, in which
$O\in\mathcal{O}_D$ ranges over open sets and
$\operatorname{ev}_O(\sigma)=\sigma(O)$. Thus, $\operatorname{ev}_O$
is a lower semi-continuous map, i.e., a Scott continuous map into
$([0,1],\leq)$. If $D$ is a continuous domain then the weak topology
on $P(D)$ coincides with the Scott topology on $P(D)$.

 The Scott topology on $L$ is the natural topology for computation. In Appendix~\ref{appendix:not_Lawson_cont}, we will show that a continuous valuation is not in general continuous with respect to the Lawson topology, which is finer than the Scott topology. 
 
\subsection{Approximable relations}

Following the established framework and tradition for imprecise
probabilities, as extensively described in the classical pioneering
work~\cite{Walley1991}, we will formulate predicates for the basic
domain-theoretic constructions. This can be achieved by the rich
theory of approximable mappings, locales, and Stone duality in domain
theory~\cite{Scott70,smyth1977effectively,abramsky1995domain,abramsky1991domain,vickers1989topology}.

In fact, we will formulate a predicate $G$ that can be used (i) to define a basis of Scott open subsets of $P(D)$, for a basic space $D$ and, (ii) to characterise a given continuous valuation on $\mathcal{O}(D)$. We consider a countable basis $B_D$ of the lattice of open subsets of a basic space $D$, closed under finite union and finite intersection.

We define a basis of open sets on the spaces of continuous valuations by sets given as follows:
\[
G(v;p) = \{ \sigma \in P(D) \mid p < \sigma(v) \}
\]
Let $\mathbb{Q}_0:=\mathbb{Q}\cap(0,1)$. Each continuous valuation $\sigma$ can be fully described by the set of all open sets $G(v;p)$ to which it belongs. In the spirit of approximable mappings, a continuous probability valuation $\sigma$ is characterised by the relation $G_\sigma\subset B_D\times \mathbb{Q}_0$ containing all the pair $(v, p)$ such that $p<\sigma(v)$. 

Formally, we define an approximable mapping over an abstract basis, a set with a transitive, interpolative relation~\cite{abramsky1995domain} for a given continuous probability valuation $\sigma$ by the relation $G\subset B_D\times \mathbb{Q}_0$ satisfying the following axioms.  
The first four are conditions to represent a Scott continuous map and the last two capture the modularity conditions:

\begin{enumerate}
    \item $ \ G(D; p)$
    \item $ \ G(\emptyset; p) \implies \operatorname{False}$

    \item $
    \ v \subseteq v' \ \wedge \ p \geq p' \wedge G(v;p) \implies G(v';p')$
    \item $
    G(v;p) \implies \exists \ v' \ll v, \, \ p' > p . \, G(v';p')$
    \item $  (p+q = p'+q') \ \wedge \ G(u;p) \ \wedge \  G(v;q) \implies G((u \cup v);p') \vee G((u \cap v);q')$ (Sub-modularity)
    \item  $(p+q = p'+q') \ \wedge G((u \cup v);p) \ \wedge \  G((u \cap v);q) \implies G(u;p') \vee G(v;q')$ (Super-modularity)
\end{enumerate}
These axioms follow the same schema as those in \cite{MoshierJung2002}.


\section{Event domain of basic spaces}

Given a basic space $D$, we consider its open sets as observable or semi-decidable predicate~\cite{abramsky1991domain,smyth1977effectively}. Since taking complements of an event in probability and statistics is an essential tool and since the complement of an open set may not be open, we consider approximation of the exterior of an open set by disjoint opens. This leads us to 
 the domain $\mathbb{E}(D)$ of {\em events} of $D$ as the partial order of disjoint pairs of open sets ordered component-wise by subset inclusion:
\[\mathbb{E}(D)=\left(\{(O_1,O_2): O_1,O_2\in \mathcal{O}_D, O_1\cap O_2=\emptyset\},\subseteqq\right)\]
In fact, this domain is implicitly used in~\cite[Section~9.1]{GoubaultJT23}. 
 \begin{proposition}~\cite{edalat2002foundation}
    For any basic space $D$, the partial order $\mathbb{E}(D)$ is a countably based bounded complete domain.
  \end{proposition}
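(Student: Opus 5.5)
We view $\mathbb{E}(D)$ as a sub-poset of the product $\mathcal{O}_D\times\mathcal{O}_D$. Since $D$ is a basic space, $\mathcal{O}_D$ is a countably based continuous lattice, and so is the product. The plan is to show that $\mathbb{E}(D)$ is a Scott-closed, downward-closed subset of this product that is closed under directed joins. We then transfer continuity, bounded completeness and a countable basis from the product.

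\emph{Directed completeness and bounded completeness.} The set $\mathbb{E}(D)$ is a lower set: if $(U_1,U_2)\subseteq(O_1,O_2)$ and $O_1\cap O_2=\emptyset$, then $U_1\cap U_2=\emptyset$. Its least element is $(\emptyset,\emptyset)$.

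For a directed family $(O_1^i,O_2^i)_{i\in I}$, the componentwise unions $\bigcup_i O_1^i$ and $\bigcup_i O_2^i$ are again disjoint. Indeed, if $x\in O_1^i\cap O_2^j$, directedness gives some $k$ above both $i$ and $j$, and then $x\in O_1^k\cap O_2^k=\emptyset$, which is impossible. So directed joins exist and are computed as in the product.

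For bounded completeness, suppose a subset $S$ has an upper bound $(W_1,W_2)$. Then the componentwise unions of $S$ are contained in $W_1$ and $W_2$ respectively, hence are disjoint, and they form the least upper bound of $S$. Nonempty meets are given by componentwise interiors of intersections, though they are not needed here.

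\emph{Continuity.} In the product we have $(U_1,U_2)\ll(O_1,O_2)$ iff $U_1\ll O_1$ and $U_2\ll O_2$ in $\mathcal{O}_D$. We check that the way-below relation in $\mathbb{E}(D)$ coincides with this product relation restricted to $\mathbb{E}(D)$.

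\begin{itemize}
\item Since $\mathbb{E}(D)$ is closed under directed joins taken in the product, way-below in the product implies way-below in $\mathbb{E}(D)$.
\item Conversely, suppose $(U_1,U_2)\ll_{\mathbb{E}}(O_1,O_2)$, and let $(V_1^i,V_2^i)$ be directed in the product with join above $(O_1,O_2)$. Then $(V_1^i\cap O_1, V_2^i\cap O_2)$ is directed, lies in $\mathbb{E}(D)$, and has join $(O_1,O_2)$ because $\mathcal{O}_D$ is a frame. Hence $(U_1,U_2)$ lies below some term of this family, and therefore below some $(V_1^i,V_2^i)$.
\end{itemize}

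Now fix $(O_1,O_2)\in\mathbb{E}(D)$. The set $\{(U_1,U_2): U_1\ll O_1,\ U_2\ll O_2\}$ is directed, has join $(O_1,O_2)$, and consists of disjoint pairs because $U_k\subseteq O_k$. So $\mathbb{E}(D)$ is continuous.

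\emph{Countable basis.} Let $B_D$ be the countable basis of $\mathcal{O}_D$, closed under finite unions. Take the basis of $\mathbb{E}(D)$ to be the pairs $(b_1,b_2)\in B_D\times B_D$ with $b_1\cap b_2=\emptyset$; this set is countable. For $(O_1,O_2)\in\mathbb{E}(D)$, the basis elements $b_k\ll O_k$ form directed sets with join $O_k$, and any such pair is disjoint since $b_k\subseteq O_k$. The pairs $(b_1,b_2)$ with $b_k\ll O_k$ therefore form a directed subset of the proposed basis with join $(O_1,O_2)$.

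\emph{Main obstacle.} The only delicate point is the converse direction of the way-below comparison. It is handled by the frame distributivity of $\mathcal{O}_D$, which lets us intersect an arbitrary product-directed family with $(O_1,O_2)$ to land inside $\mathbb{E}(D)$ without losing the join.
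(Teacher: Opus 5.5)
Your proof is correct. Viewing $\mathbb{E}(D)$ as a lower set of the countably based continuous lattice $\mathcal{O}_D\times\mathcal{O}_D$ that is closed under directed joins (a Scott-closed subset) transfers bounded completeness, continuity with the componentwise way-below relation, and the countable basis of disjoint pairs from $B_D\times B_D$ exactly as you describe. The paper gives no proof of its own and only cites \cite{edalat2002foundation}, and yours is the standard argument; note that your ``delicate'' converse way-below inclusion is not needed for continuity, and it holds for any Scott-closed subset of a continuous dcpo anyway, since each element is the directed join of the ambient elements way below it, all of which lie in the subset.
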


  We can define various operations on events as follows.
  The two binary operations of intersection $\cap:\mathbb{E}(D)\times
  \mathbb{E}(D)\to \mathbb{E}(D)$ and union $\cup:\mathbb{E}(D)\times \mathbb{E}(D)\to \mathbb{E}(D)$ of events are defined for a basic space $D$ as follows:
\begin{enumerate}
    \item {\bf Intersection of events:}  $(O_1,O_2)\cap(U_1,U_2)=(O_1\cap U_1, O_2\cup U_2)$.
    \item {\bf Union of events:} $(O_1,O_2)\cup (U_1,U_2)=(O_1\cup U_1, O_2\cap U_2)$.

\setcounter{saveenum}{\value{enumi}}

\end{enumerate}

The binary operations $\cap$ and $\cup$ are Scott continuous~\cite{edalat2002foundation}.The basis $B_D$ of $D$ induces a basis $B_{\mathbb{E}(D)}$ closed under finite intersection and finite union.

 Let $D_1$ and $D_2$ be basic spaces; we define the product of their event spaces:
 
 \[\otimes:\mathbb{E}(D_1)\times \mathbb{E}(D_2)\to \mathbb{E}(D_1\times D_2)\]
with $(O,U)\mapsto (O_1\times U_1,(O_2\times D_2)\cup (D_1\times U_2) )$.
Since the Cartesian product and finite union are continuous operations, so is $\otimes$.

More generally, finite products are defined by associativity as follows. If     $U^i=(U^i_1,U^i_2)\in \mathbb{E}(D_i)$, for $i\in I$, is a a finite set of events in the domains $D_i$ for $i\in I$, respectively, their product in the product space $\prod_{i\in I}D_i$ is given by:
\begin{enumerate}
\setcounter{enumi}{\value{saveenum}}
    \item {\bf Finite product of events:} \[\otimes_{i\in I}U^i=\left(\prod_{i\in I}U^i_1,\bigcup_{i\in I} \left( U^i_2\times \prod_{j\neq i} D_{j}\right)\right).\]
\end{enumerate}

Let $\sigma$ be a continuous probability valuation on $D$ and
$(V_1,V_2)$ be an event in $D$, i.e., a pair of disjoint basic open
sets in $B$. We can define the map
\begin{equation}
\arrayoptions{0.5ex}{1.2}
\begin{array}{cl}
  \mathcal{E}: & P(D)\times \mathbb{E}(D)\to \mathbb{I}[0,1] \\
  & (\sigma,(V_1,V_2))\mapsto \sigma(V_1,V_2)=\left[\sigma(V_1),1-\sigma(V_2)\right]
\end{array}
\end{equation}
 Note that $\mathcal{E}$ is well-defined:  the condition $V_1\cap V_2=\emptyset$ implies $\sigma(V_1)+\sigma(V_2)\leq 1$ and thus $\sigma(V_1)\leq 1-\sigma(V_2)$. If $X$ is a Hausdorff space then $P(X)$ is equipped with its weak topology, whereas if $D$ is a continuous domain $P(D)$ is equipped with its Scott topology. In either case, the map $\mathcal{E}$ is a continuous map: If $X$ is Hausdorff, then the map $\operatorname{ev}_O:P(X)\to [0,1]$ is lower semi-continuous for any open set $O\subset D$, which implies the continuity of $\mathcal{E}$. If $D$ is a continuous domain then  the map $\operatorname{ev}_O:P(D)\to [0,1]$ is Scott continuous, which implies the continuity of $\mathcal{E}$.

 For a given continuous probability valuation $\sigma$, we write
 $\mathcal{E}_\sigma: \mathbb{E}(D)\to \mathbb{I}[0,1]$ with
 $\mathcal{E}_\sigma(O_1,O_2)=\mathcal{E}(\sigma,(O_1,O_2))$.  The map
 $\mathcal{E}_\sigma$ is then represented by the approximable map:
 $\widehat{G}\subseteq B_{\mathbb{E}(D)}\times \mathbb{I}\mathbb{Q}_0$
 defined by
 \[\widehat{G}((u,v),[p,q])\ \mbox{ if }\ (\sigma(u)>p) \wedge (
   1-\sigma(v)<q ).\]Since $1-\sigma(v)<q $ iff $\sigma(v)>1-q$ we
 obtain:
\[\widehat{G}((u,v),[p,q]) \iff (\sigma(u)>p) \wedge ( 1-\sigma(v)<q )\]\[\iff (\sigma(u)>p) \wedge  (\sigma(v)>1-q )\iff G(u; p)\,\wedge G(v; (1-q))\]

\section{Credal sets}
The modern term \emph{credal set} for a convex set of probability distributions was standardised in later treatments \cite{Augustin2014}, though the underlying theory was developed by Walley \cite{Walley1991} under the name ``sets of probability measures.''

 Recall that if $X$ is a locally compact Hausdorff space, its upper space $U(X)$, the space of non-empty compact subsets of $X$ ordered by reverse inclusion, is a bounded complete domain with the topological embedding $e:X\to U(X)$ given by $e(x)=\{x\}$~\cite{edalat1995dynamical}. We identify $x\in X$ with the singleton $\{x\}\in U(X)$. 
 
 For a compact metric space $X$, we consider the domain $U(P(X))$ as the space of compact credal sets for $X$. Any $\sigma\in P(X)$ gives (via the singleton set $\{\sigma\}\in U(P(X))$) a maximal element of $U(P(X))$. Then $P(X)\subseteq U(P(X))$ is dense with respect to the Scott topology on $U(P(X))$ and thus by the property of densely injective spaces~\cite{gierz2003continuous}, we get for each $O\in\mathcal{O}_X$, an extension 
 
\[\widehat{\operatorname{ev}_O}:U(P(X))\to ([0,1],\leq)\]
with $\widehat{\operatorname{ev}_O}(\{\sigma\})=\sigma(O)$ and $\widehat{\operatorname{ev}_O}(K)=\min\{\sigma(O):\sigma\in K\}$. 

The map $\mathcal{E}:P(X)\to [ \mathbb{E}(X)\to \mathbb{I}[0,1] ]$
extends to a Scott continuous map
$\hat{\mathcal{E}}:U(P(X))\to (\mathbb{E}(X)\to \mathbb{I}[0,1])$ with
$\hat{\mathcal{E}}(\{\sigma\})=\mathcal{E}(\sigma)$, for
$\sigma\in P(X)$. Thus, we obtain an extension of $\mathcal{E}$ on the
space of credal sets given by:
\begin{equation}\label{extension-to-credal}
\hat{\mathcal E}(K)(O_1,O_2)=\left[\inf_{i\in I}\sigma_i(O_1),1-\inf_{i\in I}\sigma_i(O_2)\right].    
\end{equation}

When $(O_1,O_2)$ satisfies the classical event condition $\operatorname{CE}(O)$ (i.e., $\sigma(O_1)+\sigma(O_2)=1$ for all $\sigma$), we obtain common grounds with the theory of capacities. In  fact, the right-hand side of Equation~\ref{extension-to-credal} computes precisely the \emph{lower and upper probabilities} (or \emph{belief and plausibility}) of the event $O_1$ with respect to the credal set $K$.  Indeed,
\[
\underline{P}(O_1)=\inf_{\sigma\in K}\sigma(O_1),\]\[
\overline{P}(O_1)=\sup_{\sigma\in K}\sigma(O_1)=1-\inf_{\sigma\in K}\sigma(O_2),
\]
so that
\[
\hat{\mathcal{E}}(K)(O_1,O_2)=\bigl[\,\underline{P}(O_1),\,\overline{P}(O_1)\,\bigr].
\]
The lower and upper probabilities, as computed above domain-theoretically, correspond to the Choquet integral of indicator functions with respect to the capacities induced by the credal set $K$. 

The sub-domain of $U(P(X))$ consisting of non-empty convex compact
subsets of $P(X)$ has a basis consisting of convex polytopes in $P(X)$
as every convex subset of $P(X)$ is the intersection of convex
polytopes way-below it. For such a basis element with vertices
$\sigma_i$ with $i\in I$, we have:
$\widehat{\operatorname{ev}_O}(K)=\min\{\sigma_i(O):i\in I\}$.

Thus, if $K\in U(P(X))$ is a convex polytope in $P(X)$ with vertices $\sigma_i$ for $i\in I$, then for any $(O_1,O_2)\in \mathbb{E}(X)$ we have:

\begin{equation*}
\hat{\mathcal E}(K)(O_1,O_2)=\left[\min_{i\in I}\sigma_i(O_1),1-\min_{i\in I}\sigma_i(O_2)\right].    
\end{equation*}

\section{Conditional probability of events}
\label{cond-prob}

We define the conditional probability map with respect to the event domain $\mathbb{E}(D)$ as the map $\mathcal{C}:P(D)\times \big({\mathbb{E}(D)}\times {\mathbb{E}(D)} \big)\to \mathbb{I}[0,1]$:
\begin{equation}\label{cond-prob-map-events}\mathcal{C}(\sigma,((V_1,V_2),(O_1,O_2)) = \left[\frac{\sigma(V_1\cap O_1)}{1-\sigma(O_2)},1-\frac{\sigma(V_2\cap O_1)}{1-\sigma(O_2)}\right],\end{equation}
provided $\sigma(O_1)>0$, which implies $1-\sigma(O_2)>0$. 
To this effect, for $O = (O_1, O_2)\in \mathbb{E}(D)$, we define the positivity predicate:
  \[ \operatorname{Pos}(O) = \exists p>0: G(O_1;p)\]

  If $\sigma(O_1)+\sigma(O_2)=1$, then the left end point of the
  interval in Equation~\ref{cond-prob-map-events} is reduced to the
  classical conditional probability of $V_1$ given $O_1$. If, in
  addition, $\sigma(V_1)+\sigma(V_2)=1$, then the right end point also
  equals the left end point, i.e., we simply have the point-valued
  classical conditional probability.
  
  The map $\mathcal{C}$ is easily checked (like the map $\mathcal{E}$) to be continuous for any basic space $D$, be it Hausdorff or non-Hausdorff.

  As for $\mathbb{E}$, if $X$ is a compact metric space, then we have the Scott continuous extension of $\mathcal{C}$ to the space of compact credal subsets of $P(X)$ of type:
  \[\hat{\mathcal{C}}:U(P(X))\times {\mathbb{E}(X)}\times {\mathbb{E}(X)}\to \mathbb{I}[0,1].\]

  For a convex polytope $K\in U(P(X))$, with vertices $\sigma_i$ for $i\in I$, we also have:

\[\hat{\mathcal{C}}(K,((V_1,V_2),(O_1,O_2)) = \left[\min_{i\in I}\frac{\sigma_i(V_1\cap O_1)}{1-\sigma_i(O_2)},1-\min_{i\in I}\frac{\sigma_i(V_2\cap O_1)}{1-\sigma_i(O_2)}\right]\]

\subsection{Conditional probability predicates}
\label{subsec:cond-prob-pred}

Since the Scott continuous map $\mathcal{C}$ is given by a pair of rational functions of the input continuous valuation $\sigma$ on the input opens or their intersections, one can in principle obtain the approximable mapping representing $\mathcal{C}$ by taking compositions of the given operations. This method, however, leads to quite complicated expressions. A more natural and straightforward technique is to formulate two key predicates for the lower and upper parts of $\mathcal{C}$ and relate them to the predicate $G$ for representing $\sigma$. 

For two events $(O_1,O_2)$ and $(V_1,V_2)$ and rational numbers
$p,q \in (0,1)$, with $p\leq q$, we have the {\em lower end} predicate
$C_\sigma^-((V_1,V_2),(O_1,O_2);p)$ and the {\em upper end} predicate
$C_\sigma^+((V_1,V_2),(O_1,O_2);q)$ with the intended meaning given,
respectively, by:
\[p< \sigma(V_1\cap O_1)/(1-\sigma (O_2)),\quad 1-\sigma(V_2\cap O_1)/(1-\sigma (O_2))<q.\] The rules relating $G$ with $C^-$ and $C^+$ are as follows. We assume $\operatorname{Pos}(O_1,O_2)$ for all the rules below.

\begin{enumerate}
   \item[(L1)] 
   $ G_\sigma(V_1\cap O_1;p_1) \ \wedge \ $ $ G_\sigma(O_2;p_2)\implies C_\sigma^-\left ((V_1,V_2),(O_1,O_2);\frac{p_1}{1-p_2}\right)$. 
   \item[(L2)] $ C_\sigma^-((V_1,V_2),(O_1,O_2);p_1) \implies  G_\sigma(O_2; p_2) \ \vee \ G_\sigma(V_1\cap O_1; p_1 \, (1 - p_2))$. 
   \item[(U1)] $ G_\sigma(V_2\cap O_1;p_1) \ \wedge \
     G_\sigma(O_2;p_2) \implies C_\sigma^+\left((V_1,V_2),(O_1,O_2);1 - \frac{p_1}{1-p_2}\right)$. 
   \item[(  U2)] $ C_\sigma^+((V_1,V_2),(O_1,O_2);q) \implies G_\sigma(O_2;p) \vee G_\sigma(V_2\cap O_1;(1-q)(1-p))$. 
\end{enumerate}
It is easy to check that the four rules L1, L2, U1, and U2 are sound given the intended meaning of $C_\sigma^-$ and $C_\sigma^+$.

\begin{theorem} ({\bf Completeness for conditional probability})
With the four rules  L1, L2, U1, and U2, the predicates $C^-$ and $C^+$ represent the conditional probability map $\mathcal{C}$ as follows:   
\[\frac{\sigma(V_1\cap O_1)}{1-\sigma(O_2)}=\sup\{p: C_\sigma^-(V,(O_1,O_2);p)\}\]
\[1-\frac{\sigma(V_2\cap O_1)}{1-\sigma(O_2)}=\inf\{q: C_\sigma^+(V,(O_1,O_2);q)\}\]
\end{theorem}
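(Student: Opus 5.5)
We prove each of the two equalities by a pair of inequalities. Soundness of the rules gives one direction, and the constructive rules (L1) and (U1), together with the approximation property of $G_\sigma$, give the other. Write $a=\sigma(V_1\cap O_1)$, $b=\sigma(O_2)$ and $c=\sigma(V_2\cap O_1)$. Recall that $G_\sigma(u;p)$ holds exactly when $p<\sigma(u)$. The standing hypothesis $\operatorname{Pos}(O_1,O_2)$ gives $\sigma(O_1)>0$, so $b\le 1-\sigma(O_1)<1$ and every quotient below is well defined.

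For the lower end, soundness of the rules relative to the intended meaning says that $C_\sigma^-(V,(O_1,O_2);p)$ implies $p<a/(1-b)$. Hence $\sup\{p: C^-_\sigma(V,(O_1,O_2);p)\}\le a/(1-b)$. Conversely, suppose $a>0$; if $a=0$, the supremum over the empty or degenerate set is read as $0$, by the convention that predicates only take rationals in $(0,1)$. Choose rationals $p_1<a$ arbitrarily close to $a$ and $p_2<b$ arbitrarily close to $b$, with $p_2\ge 0$; when $b=0$ we need the convention that $G(O_2;0)$ holds, or we take the limit $p_2\to 0$. Then $G_\sigma(V_1\cap O_1;p_1)$ and $G_\sigma(O_2;p_2)$ hold, and (L1) yields $C^-_\sigma\bigl(V,(O_1,O_2);p_1/(1-p_2)\bigr)$. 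The map $(p_1,p_2)\mapsto p_1/(1-p_2)$ is continuous and increasing in both arguments on $[0,1)\times[0,1)$, so letting $p_1\uparrow a$ and $p_2\uparrow b$ makes these values converge to $a/(1-b)$. The supremum is therefore at least $a/(1-b)$. To fit the predicate's requirement that arguments lie in $(0,1)$, we pass to rational $p'<p_1/(1-p_2)$ using monotonicity. Monotonicity of $C^-$ in $p$ follows from (L2) followed by (L1), or can be taken as part of the intended semantics.

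The upper end is symmetric. Soundness of (U2) means that $C^+_\sigma(V,(O_1,O_2);q)$ forces $1-c/(1-b)<q$, so the infimum is at least $1-c/(1-b)$. For the reverse inequality, take rationals $p_1\uparrow c$ and $p_2\uparrow b$. Rule (U1) then gives $C^+_\sigma$ at $1-p_1/(1-p_2)$, which decreases to $1-c/(1-b)$. Hence the infimum is at most that value.

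The main obstacle is making the soundness direction rigorous purely from the rules rather than from the intended semantics. To do this we would use (L2) as the derivation of $G$-facts from a $C^-$ fact. Given $C^-_\sigma(V,(O_1,O_2);p_1)$ and any rational $p_2$ with $p_2\ge b$ (so that $G_\sigma(O_2;p_2)$ fails), (L2) forces $G_\sigma(V_1\cap O_1;p_1(1-p_2))$, that is, $p_1(1-p_2)<a$. Letting $p_2\downarrow b$ gives $p_1\le a/(1-b)$. The strict inequality is not needed for the supremum. The same argument with (U2) handles $C^+$: it gives $(1-q)(1-p)<c$ for all $p\ge b$, hence $1-q\le c/(1-b)$. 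The endpoint cases $b=0$ and $a=0$, where the rationals may fall outside $(0,1)$, need separate bookkeeping with these conventions.
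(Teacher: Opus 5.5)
Your proposal is correct and follows essentially the paper's proof. The lower bound on the supremum comes from (L1) applied to rationals $p_1<\sigma(V_1\cap O_1)$ and $p_2<\sigma(O_2)$. The upper bound comes from (L2) applied with a rational at or just above $\sigma(O_2)$, so that the disjunct $G_\sigma(O_2;\cdot)$ fails; the paper phrases this as a contradiction, you as a direct limit. The $C^+$ case is handled symmetrically via (U1)/(U2).

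Your first-paragraph appeal to ``soundness relative to the intended meaning'' would be circular on its own, so the (L2) argument in your last paragraph is the one that must carry that direction. Your remarks on the edge cases $\sigma(O_2)=0$ (which arises, e.g., when conditioning on $(D,\emptyset)$ in (B1)) and $\sigma(V_1\cap O_1)=0$ address points the paper's proof passes over.
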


\begin{proof}
 For the lower predicate $C_\sigma^-$, we first check that:
\[\frac{\sigma(V_1\cap O_1)}{1-\sigma(O_2)}\leq\sup\{p: C_\sigma^-(V,(O_1,O_2);p)\}.\]
Let $p<\sigma(V_1\cap O_1)/(1-\sigma(O_2))$. Then by the continuity of division of a number by a non-zero number, there exists $p_1<\sigma(V_1\cap O_1)$ and $p_2< \sigma(O_2)$ such that $p< p_1/(1-p_2)$, and the result follows from (L1).

For the reverse direction, if \[\frac{\sigma(V_1\cap O_1)}{1-\sigma(O_2)}<\sup\{p: C_\sigma^-(V,(O_1,O_2);p)\},\]
then there exists $p$ such that $C_\sigma^-(V,(O_1,O_2),p)$ and \[\frac{\sigma(V_1\cap O_1)}{1-\sigma(O_2)}<p.\]
By the continuity of division of a number by a non-zero number, there exists a rational $q\in [0,1]$ close to $\sigma(O_2)$ with $\sigma(O_2)< q$ such that $\sigma(V_1\cap O_1)<p(1-q)$. 

By (L2), from $C_\sigma^-(V,(O_1,O_2),p)$ one can derive:
$ G_\sigma(O_2; q) \ \vee \ G_\sigma(V_1\cap O_1; p \, (1 - q))$,
which contradicts the predicate relations $ G_\sigma(O_2; q)$ and
$G_\sigma(V_1 \cap O_1, p(1-q))$, i.e., $q < \sigma(O_2)$,
$p(1-q)<\sigma(V_1\cap O_1)$.


Similarly, we show that \[\inf\{q:C_\sigma^+((V_1,V_2),(O_1,O_2);q)\}=1-\sigma(V_2\cap O_1)/(1-\sigma (O_2)).\]
In fact, this follows immediately from the result on the left end point since the right end point is simply one minus the left end point with $V_1$ replaced with $V_2$.
\end{proof}

We write $C_\sigma((V_1,V_2),(O_1,O_2);[p,q])$ if and only if:
\begin{equation*}
  C_\sigma^-((V_1,V_2),(O_1,O_2);p)\wedge C_\sigma^+((V_1,V_2),(O_1,O_2);q). 
\end{equation*}
For events $O$ and $V$, we now write:
\begin{equation*}
  \sigma(V|O):=\mathcal{C}_\sigma(V,O) = [p,q].
\end{equation*}

\begin{example}
\label{example:classical_vs_interval_prob}
Consider a sensor measuring a continuous value $x \in [0,1]$. The sensor outputs a ``high'' reading under the following conditions:
\begin{itemize}
    \item {Definitely high:} if $x \in (0.6, 1)$ (normal operation range).
    \item {Never high:} if $x \in (0.1, 0.6)$ (calibrated range where false positives are impossible).
    \item {Unknown:} if $x \in (0, 0.1)$ (possible false positives due to a sensor fault).
\end{itemize}

We want to assess the hypothesis that $x$ is in a \textbf{critical range} $[0.7, 0.9]$, but our knowledge of this range is also imprecise:
\begin{itemize}
    \item {Definitely critical:} if $x \in (0.8, 0.85)$.
    \item {Definitely not critical:} if $x \in (0, 0.7) \cup (0.95, 1)$.
    \item {Unknown:} for $x \in [0.7, 0.8] \cup [0.85, 0.95]$.
\end{itemize}

We assume a uniform prior probability measure $\sigma$ on $[0,1]$, so for any set $A$, $\sigma(A)$ is its Lebesgue measure (length).

Let the event $E = (E_1, E_2)$ be $E_1 = (0.6, 1)$ and $E_2 = (0.1, 0.6)$. $E_1$ is the set of points that are \emph{definitely} in the evidence, while $E_2$ is the set of points that are \emph{definitely not} in the evidence. Thus, the evidence is any set $B$ satisfying: 
\[
E_1 \subseteq B \subseteq [0,1] \setminus E_2 = [0, 0.1] \cup [0.6, 1].
\]

Let the hypothesis $H = (H_1, H_2)$ be $H_1 = (0.8, 0.85)$ and $H_2 = (0, 0.7) \cup (0.95, 1)$. The hypothesis is any set $A$ satisfying: 
\[
H_1 \subseteq A \subseteq [0,1] \setminus H_2 = [0.7, 0.95].
\]

The interval conditional probability of $H$ given $E$ is:
\[
C_\sigma(H, E) = \left[ \frac{\sigma(H_1 \cap E_1)}{1 - \sigma(E_2)},\; 1 - \frac{\sigma(H_2 \cap E_1)}{1 - \sigma(E_2)} \right].
\]
We have:
\begin{itemize}
    \item $\sigma(E_2) = \text{length}(0.1, 0.6) = 0.5, \quad 1 - \sigma(E_2) = 0.5$.

\item $H_1 \cap E_1 = (0.8, 0.85), \quad \sigma(H_1 \cap E_1) = 0.05$.

\item $H_2 \cap E_1 = (0.6, 0.7) \cup (0.95, 1), \quad
\sigma(H_2 \cap E_1) = 0.15$.
\end{itemize}
As such, we obtain the lower endpoint:
    \[
    \frac{\sigma(H_1 \cap E_1)}{1 - \sigma(E_2)} = \frac{0.05}{0.5} = 0.1,
    \]
    and the upper endpoint:
    \[
    1 - \frac{\sigma(H_2 \cap E_1)}{1 - \sigma(E_2)} = 1 - \frac{0.15}{0.5} = 1 - 0.3 = 0.7.
    \]
    which results in the interval conditional probability of:
\[
{C_\sigma(H, E) = [0.1, 0.7]}.
\]

In classical probability, one must choose precise sets for the evidence and the hypothesis. For instance, one might have $B = (0, 0.1) \cup (0.6, 1)$ as evidence and $A = (0.7, 0.9)$ as the hypothesis.

Then, the classical conditional probability is calculated as:
\begin{equation*}
  P(A \mid B) = \frac{P(A \cap B)}{P(B)} = \frac{\text{length}\big((0.7, 0.9) \cap ((0,0.1) \cup (0.6,1))\big)}{\text{length}((0,0.1) \cup (0.6,1))}  = \frac{0.2}{0.5} = 0.4.  
\end{equation*}
Thus, the classical approach yields a single point estimate:
\[
{P_{\text{classical}}(H \mid E) = 0.4} \in [ 0.1, 0.7].
\]

\begin{table*}[t]
\centering
\begin{tabular}{@{}lll@{}}
\toprule
\textbf{Aspect} & \textbf{Classical Probability} & \textbf{Interval Probability} \\ \midrule
Result & Point estimate: 0.4 & Interval: [0.1, 0.7] \\
Representation of uncertainty & Hides uncertainty by precision & Explicitly shows uncertainty \\
Interpretation of evidence & Forces $B = (0,0.1) \cup (0.6,1)$ & Allows any $B$ with $E_1 \subseteq B \subseteq [0,1]\setminus E_2$ \\
Interpretation of hypothesis & Forces $A = [0.7, 0.9]$ & Allows any $A$ with $H_1 \subseteq A \subseteq [0,1]\setminus H_2$ \\
Information content & Appears precise but is arbitrary & Honestly reflects imprecision \\
Decision-making & May lead to overconfidence & Supports robust decisions (e.g., worst-case analysis) \\ \bottomrule
\end{tabular}
\caption{Comparison between classical and interval probability approaches (Example~\ref{example:classical_vs_interval_prob}).}
\end{table*}
\end{example}

The conditional probability can be extended to credal sets on a Hausdorff metric space space $X$ as follows. \[\mathcal{C}:U(P(X))\times {\mathbb{E}(X)}\times {\mathbb{E}(X)} \to \mathbb{I}[0,1]\]
\[\widehat{\mathcal{C}}(K,((V_1,V_2),(O_1,O_2)) = \left[\inf_{\sigma\in K}\frac{\sigma(V_1\cap O_1)}{1-\sigma(O_2)},1-\inf_{\sigma\in K}\frac{\sigma(V_2\cap O_1)}{1-\sigma(O_2)}\right],\]
provided $\sigma(O_1)>0$ (which implies $1-\sigma(O_2)>0$) for all $\sigma\in K$. For a convex polytope, taking infimum is reduced to taking minimum over all vertices.

\begin{example}
\label{example:credal_vs_class_Bayesian}

Let \(X = [0,1]\) and define the events:
\[
O_1 = (0.3,0.7), \quad O_2 = (0,0.1)\cup(0.9,1), \quad O = (O_1, O_2),
\]
\[
V_1 = (0.4,0.6), \quad V_2 = (0,0.2)\cup(0.8,1), \quad V = (V_1, V_2).
\]

Consider the credal set \(K = \{\sigma_1, \sigma_2, \sigma_3, \sigma_4\}\) in which:
\[
\sigma_1 = \mathrm{Beta}(2,5),\ \sigma_2 = \mathrm{Beta}(5,2),\ \sigma_3 = \mathrm{Beta}(3,3),\ \sigma_4 = \mathrm{Beta}(0.5,0.5).
\]
The relevant numerical values taken by these distributions are presented in Table~\ref{table:sigma_values}.

\begin{table}[h]
\centering
\begin{tabular}{cccc}
\toprule
\(\sigma\) & \(\sigma(O_2)\) & \(\sigma(V_1 \cap O_1)\) & \(\frac{\sigma(V_1 \cap O_1)}{1-\sigma(O_2)}\) \\
\midrule
\(\sigma_1\) & 0.085 & 0.525 & 0.574 \\
\(\sigma_2\) & 0.410 & 0.498 & 0.844 \\
\(\sigma_3\) & 0.01712 & 0.365 & 0.371 \\
\(\sigma_4\) & 0.4096 & 0.239 & 0.405 \\
\bottomrule
\end{tabular}
\caption{Numerical values for Example~\ref{example:credal_vs_class_Bayesian}}
\label{table:sigma_values}
\end{table}

\emph{Credal Set Conditional Probability:} For each \(\sigma \in K\),
the values $\sigma(O_2)$ and $\sigma(V_1 \cap O_1)$ are given in
Table~\ref{table:sigma_values}. Since \(\sigma(V_2 \cap O_1) = 0\) for
all \(\sigma\), the conditional probability interval is:
\[
\hat{C}(K, (V, O)) = 
\left[ \inf_{\sigma \in K} \frac{\sigma(V_1 \cap O_1)}{1 - \sigma(O_2)},\ 1 \right]
= [0.371,\ 1].
\]

\emph{Classical Conditional Probability:} Take the mixture prior:
\[
\sigma_{\text{avg}} = \frac{1}{4} \sum_{i=1}^4 \sigma_i.
\]
We have $\sigma_{\text{avg}}(O_2) \approx 0.23043$, and $\sigma_{\text{avg}}(V_1 \cap O_1) \approx 0.40675$. The classical conditional probability is calculated as:
\[
P_{\text{classical}}(V \mid O) = 
\frac{0.40675}{1 - 0.23043} \approx 0.528.
\]

\begin{table*}[t]
\centering
\begin{tabular}{p{0.4\textwidth}p{0.25\textwidth}p{0.25\textwidth}}
\toprule
\textbf{Aspect} & \textbf{Credal Set (Interval)} & \textbf{Classical Bayesian} \\
\midrule
Prior form & Set of distributions \(K\) & Single distribution \(\sigma_{\text{avg}}\) \\
Output & Interval \([0.371, 1]\) & Point estimate \(0.528\) \\
Uncertainty preserved & Yes (width \(\approx 0.629\)) & No (collapsed) \\
Robustness to prior choice & High & Low \\
Decision safety & Conservative lower bound 0.371 & May over/under-estimate risk \\
\midrule
Classical point inside interval? & \multicolumn{2}{c}{Yes, \(0.528 \in [0.371, 1]\)} \\
\bottomrule
\end{tabular}
\caption{Comparison of the domain-theoretic conditional probability based on credal sets against the classical conditional probability (Example~\ref{example:credal_vs_class_Bayesian}).}
\label{table:credal_vs_class_Bayesian}
\end{table*}

The credal set method produces an interval that captures uncertainty across multiple priors, while the classical approach yields a single number that depends on an arbitrary choice of prior (here, the average). In safety‑critical applications, the interval's lower bound provides a robust, risk‑averse estimate, whereas the classical point estimate may misrepresent the true uncertainty. Further comparison is presented in Table~\ref{table:credal_vs_class_Bayesian}.
\end{example}

\section{Bayesian updating for events}

Recall the classical Bayesian rule for updating hypothesis $H$ with evidence $E$ and $\sigma$ as the probability distribution assuming $\sigma(E)\neq 0$:
\[\sigma(H|E)=\frac{\sigma(H)\sigma(E|H)}{\sigma(E)}=\frac{\sigma(H)\sigma(E|H)}{\sigma(H)\sigma(E|H)+\sigma(E|\neg H)(1-\sigma(H))}.\]

Let $x:=\sigma(H)$, $y:=\sigma(E|H)$, $z:=\sigma(E|\neg H)$; define $f_b:[0,1]^3\to [0,1]$ by 
\[f_b(x,y,z)=\frac{xy}{xy+z(1-x)}.\]
We then have for all $x,y,z\in [0,1]$:
\begin{equation*}
\frac{\partial f_b}{\partial x}=\frac{yz}{(xy+z(1-x))^2}\geq 0,\, \quad
\frac{\partial f_b}{\partial y}=\frac{x(1-x)z}{(xy+z(1-x))^2}\geq 0, \quad \frac{\partial f_b}{\partial z}=\frac{-x(1-x)y}{(xy+z(1-x))^2}\leq 0.
\end{equation*}

Thus, $f_b$ is increasing in $x$ and $y$, and is decreasing in $z$. It follows that $f_b$ has global minimum value $0$ attained at $x=y=0$ and $z=1$ and global maximum $1$ attained at $x=y=1$ and $z=0$. 
 Let $\hat{f_b}: (\mathbb{I}[0,1])^3\to \mathbb{I}[0,1]$ be the maximal extension of $\sigma$ to intervals:
   \[\hat{f_b}\left( \prod_{1\leq i\leq 3}[a_i,b_i]\right)= \left\{f_b(x,y,z): x\in [a_1,b_1],y\in [a_2,b_2],z\in[a_3,b_3]\right\}\]

\begin{lemma}\label{interval-bayes}
   \[\hat{f_b}\left( \prod_{1\leq i\leq 3}[a_i,b_i]\right)= \left[f_b(a_1,a_2,b_3),f_b(b_1,b_2,a_3)\right]\] 
\end{lemma}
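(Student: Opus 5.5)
The plan is to use the monotonicity of $f_b$ established just before the statement, together with connectedness of the box $\prod_{i}[a_i,b_i]$. There is one preliminary point. $f_b$ is undefined where $xy+z(1-x)=0$, i.e.\ at $x=y=0$, or $x=1,y=0$, or $x=0,z=0$, or $y=z=0$. I will therefore first treat the case where the denominator is strictly positive on the whole box, and then handle degenerate corners by the convention implicit in the preceding discussion, namely that $f_b$ is extended by its limiting values (as in the remark that the global minimum $0$ is ``attained'' at $x=y=0,z=1$). I will state this convention explicitly at the start of the proof.

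\textbf{Step 1: the endpoints are attained and bound the image.} The partial derivatives computed above show that on the box $f_b$ is nondecreasing in $x$ and $y$ and nonincreasing in $z$. The box is convex and $f_b$ is $C^1$ where the denominator is positive. Hence, for any $(x,y,z)$ in the box, I move one coordinate at a time: first $x\mapsto a_1$, then $y\mapsto a_2$, then $z\mapsto b_3$. Each move stays inside the box, and the mean value theorem applied along each segment gives
\[
f_b(a_1,a_2,b_3)\le f_b(x,y,z)\le f_b(b_1,b_2,a_3).
\]
Both extreme points belong to the box, so these bounds are the minimum and the maximum of $f_b$ there.

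\textbf{Step 2: the image is the full interval.} The box is path-connected and $f_b$ is continuous on it, so the image is a connected subset of $\mathbb{R}$, that is, an interval. It contains both endpoints, so by Step 1 it equals $[f_b(a_1,a_2,b_3),f_b(b_1,b_2,a_3)]$. This is the required identity for $\hat{f_b}$.

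\textbf{Main obstacle: degenerate boundary points.} The genuinely delicate part is the degenerate points, where the denominator vanishes and neither monotonicity nor continuity is automatic. There I would argue separately.
\begin{itemize}
\item If the denominator vanishes only at isolated corner points of the box, I define $f_b$ there as the limit along the relevant monotone direction. For example, $f_b(0,0,z)$ is taken as $0$ for $z>0$, consistent with $f_b(x,0,z)=0$.
\item I then observe that $\hat{f_b}$ applied to the box minus these points has closure equal to the stated interval, because monotone limits along coordinate segments preserve the inequalities of Step 1.
\item If a whole face is degenerate (e.g.\ $a_1=b_1=0$ with $a_2=0$), I would simply restrict the lemma to boxes on which the denominator is positive. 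This matches the standing hypothesis $\sigma(E)\neq 0$ under which Bayes' rule is applied.
\end{itemize}
With these conventions in place, the lemma is exactly Steps 1 and 2.
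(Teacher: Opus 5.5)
Your proof is correct when the denominator $xy+z(1-x)$ is positive on the whole box, and it is essentially the paper's argument: the lemma follows from the signs of the three partial derivatives ($f_b$ increasing in $x,y$, decreasing in $z$), and your connectedness step makes explicit what the paper leaves implicit, namely that every value between the two corner values is actually attained. You should keep only your third bullet from the degenerate-point discussion. The denominator vanishes exactly on the cube edges $\{x=z=0\}$, $\{y=z=0\}$, $\{x=1,\,y=0\}$, so $f_b(0,0,z)=0$ for $z>0$ is a genuine value, not a convention. Moreover, $f_b$ has no limit at such points (e.g.\ $f_b(x,y,tx)\to y/(y+t)$ as $x\to 0^+$, for $y>0$ and any $t\geq 0$), so Steps 1 and 2 do not go through ``with these conventions''. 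Assuming the denominator is positive on the box is precisely what the paper tacitly does when it writes $f_b:[0,1]^3\to[0,1]$.
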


 We will now extend the Bayesian rule to events. Consider two pairs of disjoint open sets $H=(H_1,H_2) $ and $E=(E_1,E_2)$ for the hypothesis and the evidence, respectively, and put $\neg H=(H_2,H_1)$.
We define $\mathcal{B}_\sigma:(\mathbb{E}(D))^3\to {\bf I}[0,1]$, using Lemma~\ref{interval-bayes}, with:

\begin{multline}
\label{interval-bayes-details}
\mathcal{B}_\sigma\left((H_1,H_2),((E|H)_1,(E|H)_2), (E|\neg H)_1,(E|\neg H)_2)\right) \\
 =\hat{f}_b(\sigma(H_1,H_2),\sigma(E|(H_1,H_2)),\sigma(E|(H_2,H_1)))\\=\bigl[f_b(p_h,p_{e|h},q_{e|\neg h}),\; f_b(q_h,q_{e|h},p_{e|\neg h})\bigr]\\=
\left[\frac{p_hp_{e|h}}{p_hp_{e|h}+q_{e|\neg h}(1-p_h)},\frac{q_hq_{e|h}}{q_hq_{e|h}+p_{e|\neg h}(1-q_h)}\right],    
\end{multline}
with the following
notations:
\begin{definition}
  \label{def:bayes-notation}
  Given hypothesis and evidence events
  $H=(H_1,H_2)$, $E=(E_1,E_2)\in\mathbb{E}(D)$ with $\sigma(E_1)>0$ (so
  that $\operatorname{Pos}(E)$ holds), define the following interval
  probabilities:
\begin{align}
    [p_h,q_h] &:= \sigma(H_1,H_2) = [\sigma(H_1), 1-\sigma(H_2)], \label{eq:ph}\\
    [p_{e|h},q_{e|h}] &:= \sigma(E|(H_1,H_2)) = C_\sigma(E,H), \label{eq:peh}\\
    [p_{e|\neg h},q_{e|\neg h}] &:= \sigma(E|(H_2,H_1)) = C_\sigma(E,\neg H), \label{eq:penoth}
\end{align}
where $C_\sigma$ is the conditional probability map from Equation~\ref{cond-prob-map-events} and $\neg H = (H_2,H_1)$.
\end{definition}
We can now derive the (sharp) interval Bayesian rule: \begin{corollary}\label{def-int-bayes-rule}
    The {\em interval Bayesian rule} is given by:
    \[\sigma((H_1,H_2)|E):=\bigl[f_b(p_h,p_{e|h},q_{e|\neg h}),\; f_b(q_h,q_{e|h},p_{e|\neg h})\bigr].\]
    \end{corollary}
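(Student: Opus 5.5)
The corollary follows from Lemma~\ref{interval-bayes} once we identify the three interval inputs. By Definition~\ref{def:bayes-notation}, the interval probability of the hypothesis is $\sigma(H_1,H_2)=[p_h,q_h]$. The conditional interval of the evidence given $H$ is $[p_{e|h},q_{e|h}]$, and given $\neg H=(H_2,H_1)$ it is $[p_{e|\neg h},q_{e|\neg h}]$. The plan is to define $\sigma((H_1,H_2)|E)$ as the maximal interval extension $\hat{f_b}$ applied to these three intervals. This is exactly what Equation~\ref{interval-bayes-details} does in the definition of $\mathcal{B}_\sigma$. We then invoke Lemma~\ref{interval-bayes} with
\[
[a_1,b_1]=[p_h,q_h],\qquad [a_2,b_2]=[p_{e|h},q_{e|h}],\qquad [a_3,b_3]=[p_{e|\neg h},q_{e|\neg h}].
\]
This immediately yields $\bigl[f_b(p_h,p_{e|h},q_{e|\neg h}),\,f_b(q_h,q_{e|h},p_{e|\neg h})\bigr]$.

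Before applying the lemma, we check that the three objects really are elements of $\mathbb{I}[0,1]$, that is, nonempty intervals with $a_i\le b_i$ contained in $[0,1]$.
\begin{itemize}
\item For $[p_h,q_h]$ this is the well-definedness of $\mathcal{E}$ noted earlier: disjointness of $H_1$ and $H_2$ gives $\sigma(H_1)+\sigma(H_2)\le 1$.
\item For the conditional intervals, write $C=1-\sigma(H_2)$. The lower end is $\sigma(E_1\cap H_1)/C$ and the upper end is $1-\sigma(E_2\cap H_1)/C$. Since $E_1\cap H_1$ and $E_2\cap H_1$ are disjoint subsets of $H_1$, and $\sigma(H_1)\le C$, the two numerators sum to at most $C$. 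Hence lower $\le$ upper, and both endpoints lie in $[0,1]$.
\item The case $\neg H$ is the same argument with $H_1$ and $H_2$ swapped.
\end{itemize}
These intervals require $\operatorname{Pos}(H)$ and $\operatorname{Pos}(\neg H)$, i.e.\ $\sigma(H_1)>0$ and $\sigma(H_2)>0$, so these are implicitly assumed for the corollary.

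The only genuine subtlety is the content of Lemma~\ref{interval-bayes} itself, which I would also verify. On the box $\prod_i[a_i,b_i]$, $f_b$ is monotone increasing in $x$ and $y$ and decreasing in $z$, by the partial-derivative computation given. Consequently, for any point in the box,
\[
f_b(a_1,a_2,b_3)\le f_b(x,y,z)\le f_b(b_1,b_2,a_3).
\]
The image is connected because $f_b$ is continuous on the connected box, so it is the full interval between these two values.

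The main obstacle is the degenerate locus where the denominator $xy+z(1-x)$ vanishes, e.g.\ $x=0,z=0$ or $y=0,x=1$. There $f_b$ is undefined, and the monotonicity argument via derivatives breaks down. I would handle this by restricting to boxes where the denominator is positive, which is guaranteed when $\sigma(E_1)>0$ as assumed in Definition~\ref{def:bayes-notation}, since that quantity bounds the denominator away from zero. Alternatively, one can extend $f_b$ by the limiting values along monotone paths; the endpoint formula is unaffected.
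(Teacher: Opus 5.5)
Your main line is the paper's own: the corollary is simply Lemma~\ref{interval-bayes} applied to the three intervals of Definition~\ref{def:bayes-notation}, as in Equation~\eqref{interval-bayes-details}. Your checks that these intervals lie in $\mathbb{I}[0,1]$ and that they require $\sigma(H_1)>0$ and $\sigma(H_2)>0$ (not just $\operatorname{Pos}(E)$) are correct additions.

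The step that fails is your treatment of the degenerate locus. It is not true that $\sigma(E_1)>0$ keeps $xy+z(1-x)$ away from zero on the box. The classical identity $xy+z(1-x)=\sigma(E)$ has no analogue at general points of the box. For a counterexample, take $[0,1]$ with Lebesgue measure, $H=((0,0.3),(0.7,1))$ and $E=((0.4,0.6),\emptyset)$. Then $\sigma(E_1)=0.2$, but $[p_{e|h},q_{e|h}]=[p_{e|\neg h},q_{e|\neg h}]=[0,1]$. So the box $[0.3,0.7]\times[0,1]\times[0,1]$ contains the edge $y=z=0$, which is missing from your list and on which $f_b$ is undefined. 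Near $(x,0,0)$, $f_b$ takes every value in $(0,1)$: along $y=\lambda z$ it equals $x\lambda/(x\lambda+1-x)$. Hence extending $f_b$ by limits along monotone paths is ill-defined, and your claim that $f_b$ is continuous on the connected box is false.

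What is true is enough to repair the argument.
(i) The two corners actually used are non-degenerate. Since $H_1\cup E_1$ and $E_2\cap H_2$ are disjoint, modularity gives $1-\sigma(H_1)-\sigma(E_2\cap H_2)\geq\sigma(E_1)-\sigma(E_1\cap H_1)$. Hence $p_hp_{e|h}+q_{e|\neg h}(1-p_h)\geq\frac{\sigma(H_1)}{1-\sigma(H_2)}\sigma(E_1)>0$. Symmetrically, $q_hq_{e|h}+p_{e|\neg h}(1-q_h)\geq\frac{\sigma(H_2)}{1-\sigma(H_1)}\sigma(E_1)>0$.
(ii) No derivatives are needed. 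Wherever the denominator is positive, clearing denominators turns $f_b(a_1,a_2,b_3)\leq f_b(x,y,z)$ into $a_1a_2\,z(1-x)\leq xy\,b_3(1-a_1)$. It likewise turns $f_b(x,y,z)\leq f_b(b_1,b_2,a_3)$ into $xy\,a_3(1-b_1)\leq b_1b_2\,z(1-x)$. Both hold factor by factor on the box.
(iii) Since $0<a_1\leq b_1<1$, the undefined set is at most the edge $\{y=z=0\}$. Its complement in the box is convex and contains both corners, so the values of $f_b$ fill exactly the interval between the corner values.

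With this repair your argument proves the corollary; in the example above it gives $[0,1]$.
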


\subsection{Bayesian Inference Rules}

We now formulate two predicates for the interval Bayesian method as we did for the interval conditional probability in Section~\ref{subsec:cond-prob-pred}. The two new predicates are $B^-(H,E;p)$ and $B^+(H,E;q)$, respectively, with the intended meaning, using the notations in Equation~\eqref{interval-bayes-details}  given, respectively, by:
\begin{equation}\label{bayes-law}
    p<\frac{p_hp_{e|h}}{p_hp_{e|h}+q_{e|\neg h}(1-p_h)},
    \end{equation}
    and
    \begin{equation}\label{bayes-law1}\frac{q_hq_{e|h}}{q_hq_{e|h}+p_{e|\neg h}(1-q_h)}<q. \end{equation}

Next, we present the four inference rules for interval Bayesian updating. Consider again the two pairs of disjoint open sets $H=(H_1,H_2) $ and $E=(E_1,E_2)$ for the hypothesis and the evidence, respectively, and let $\neg H:=(H_2,H_1)$. In the four rules below, for ease of reading,  we use the notation $p_h'$, $p_{e|h}'$ and $p'_{e|\neg h}$ for  rational numbers assumed to satisfy $p_h'<p_h$, $p_{e|h}'<p_{e|h}$, and $p'_{e|\neg h}<p_{e|\neg h}$. Similarly, we use $q_h'$, $q_{e|h}'$, and $q'_{e|\neg h}$ for rational numbers assumed to satisfy $q_h<q'_h$, $q_{e|h}<q'_{e|h}$ and $q_{e|\neg h}<q'_{e|\neg h}$.

\noindent
(B1)

\[ C^-(H,(D,\emptyset);p_1) \wedge C^-(E,H;p_2) \wedge C^+(E,\neg
  H;q_3) \implies B^-\left(H,E;\frac{p_1p_2}{p_1p_2+q_3(1-p_1)}\right).\]

\noindent
(B2)
\begin{equation*}
    B^-\left(H,E;p\right) \implies \exists p'_h,p'_{e|h},p'_{e|\neg
      h}\in\mathbb{Q}_0: \quad 
    \left(p<\frac{p'_hp'_{e|h}}{p'_hp'_{e|h}+q'_{e|\neg h}(1-p'_h)}\right) \wedge T_1 \wedge T_2 \wedge T_3.
\end{equation*}
where $T_1 = C^-(H,(D,\emptyset);p'_h)$, $T_2 = C^-(E,H;p'_{e|h})$, and $T_3 = C^+(E,\neg H;q'_{e|\neg h})$.

\noindent
(B3)\[C^+(H,(D,\emptyset);q_1) \wedge C^+(E,H;q_2) \wedge C^-(E,\neg H;p_3)\implies B^+\left(H,E;\frac{q_1q_2}{q_1q_2+p_3(1-q_1)}\right).\]

\noindent
(B4)
\begin{equation*}    B^+\left(H,E;q\right)\implies\exists
  q'_h,q'_{e|h},q'_{e|\neg h}\in\mathbb{Q}_0:  \quad
\left(\frac{q'_h q'_{e|h}}{q'_hq'_{e|h}+p'_{e|\neg h}(1-q'_h)}<q\right) \wedge T_1 \wedge T_2 \wedge T_3,
\end{equation*}
in which $T_1 = C^+(H,(D,\emptyset);q'_h)$, $T_2 = C^+(E,H;q'_{e|h})$,
and $T_3= C^-(E,\neg H;p'_{e|\neg h})$.

The soundness of the four rules (B1), (B2), (B3) and (B4) follows from Lemma~\ref{interval-bayes}. As for the completeness:

\begin{theorem}\label{bayes-complete}The Bayesian Rules (B1)--(B4) are complete.
    \end{theorem}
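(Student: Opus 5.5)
We read ``complete'' in the sense of the completeness theorem for conditional probability. Under rules (B1)--(B4), the predicates $B^-$ and $B^+$ must represent the interval Bayesian rule of Corollary~\ref{def-int-bayes-rule}. Concretely, we must show
\[f_b(p_h,p_{e|h},q_{e|\neg h})=\sup\{p: B^-(H,E;p)\},\qquad f_b(q_h,q_{e|h},p_{e|\neg h})=\inf\{q: B^+(H,E;q)\}.\]
We prove each equality by two inequalities. Throughout we use three facts established earlier. The first is Lemma~\ref{interval-bayes} (monotonicity of $f_b$). The second is the completeness theorem for conditional probability: $C^-$ and $C^+$ have suprema and infima equal to the true endpoints. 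The third is the observation that $C^-(H,(D,\emptyset);p)$ holds exactly for $p<\sigma(H_1)=p_h$, since conditioning on $(D,\emptyset)$ returns $\sigma(H_1,H_2)$. Similarly, $C^+(H,(D,\emptyset);q)$ holds exactly for $q>q_h$.

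\emph{Lower bound $\le$.} Take rational $p<f_b(p_h,p_{e|h},q_{e|\neg h})$. The function $f_b$ is continuous at this point whenever the denominator is nonzero; the degenerate case is handled separately. Hence we can choose rationals $p_1<p_h$, $p_2<p_{e|h}$ and $q_3>q_{e|\neg h}$ close enough that $p<f_b(p_1,p_2,q_3)$. This uses that $f_b$ is increasing in the first two arguments and decreasing in the third, so moving in these directions lowers the value only slightly. By the completeness theorem for conditional probability together with the observation above, $C^-(H,(D,\emptyset);p_1)$, $C^-(E,H;p_2)$ and $C^+(E,\neg H;q_3)$ all hold. Rule (B1) then gives $B^-(H,E;f_b(p_1,p_2,q_3))$. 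Downward closure of $B^-$ in $p$, which follows from its intended meaning or is added as a monotonicity convention as for $G$, yields $B^-(H,E;p)$.

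\emph{Lower bound $\ge$.} Suppose $B^-(H,E;p)$ holds. Rule (B2) produces rationals $p'_h<p_h$, $p'_{e|h}<p_{e|h}$ and $q'_{e|\neg h}>q_{e|\neg h}$, witnessed by true instances $T_1,T_2,T_3$, with $p<f_b(p'_h,p'_{e|h},q'_{e|\neg h})$. Soundness of the $C$-predicates fixes the direction of these inequalities. By Lemma~\ref{interval-bayes},
\[f_b(p'_h,p'_{e|h},q'_{e|\neg h})\le f_b(p_h,p_{e|h},q_{e|\neg h}),\]
so $p<f_b(p_h,p_{e|h},q_{e|\neg h})$. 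Taking the supremum over all such $p$ gives the inequality. The upper endpoint is symmetric: (B3) gives the $\ge$ direction for the infimum, and (B4) with the reversed monotonicity gives the $\le$ direction.

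\emph{Main obstacle.} The delicate step is the continuity-and-approximation step in the $\le$ direction. It needs $f_b$ continuous at the relevant corner, which can fail when $xy+z(1-x)=0$. We resolve this using $\operatorname{Pos}$ assumptions: where the denominator may vanish, we treat the endpoint as the limit value $0$ or $1$ dictated by monotonicity. The claim is then trivial for $p\le 0$ or $q\ge 1$. We also need the approximating rationals to lie strictly inside $(0,1)$, that is in $\mathbb{Q}_0$. This is ensured by choosing them close to but on the correct side of the true values, using density of the rationals.
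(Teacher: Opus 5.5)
Your proposal is the paper's own proof with the implicit ingredients spelled out: (B1), together with continuity of $f_b$ and completeness of the $C^\pm$ predicates, shows the target lower endpoint is at most $\sup\{p:B^-(H,E;p)\}$; (B2), together with the monotonicity of $f_b$ (Lemma~\ref{interval-bayes}), gives the reverse inequality; and (B3)/(B4) treat $B^+$ symmetrically. The one overclaim is in your last paragraph: when $q_{e|\neg h}=1$ (resp.\ $q_{e|h}=1$ for $B^+$) there is no rational in $(0,1)$ on the correct side, so (B1) (resp.\ (B3)) can never fire even though the target endpoint may lie strictly inside $(0,1)$---a boundary defect of the rules themselves (it disappears if $C^+$ may take rational arguments $q>1$) that the paper's two-line proof also passes over.
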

    \begin{proof}
          We need to check the predicate $B^-(H,E;p)$  satisfies:
           \begin{equation}\label{bayes-first}
                \sup\{p:B^-(H,E;p)\}=\frac{p_hp_{e|h}}{p_hp_{e|h}+q_{e|\neg h}(1-p_h)}.\end{equation}
           From (B1), we obtain: \[\sup\{p:B^-(H,E;p)\}\geq \frac{p_hp_{e|h}}{p_hp_{e|h}+q_{e|\neg h}(1-p_h)},\]
        and (B2) implies 
        \[\sup\{p:B^-(H,E;p)\}\leq \frac{p_hp_{e|h}}{p_hp_{e|h}+q_{e|\neg h}(1-p_h)},\]
        from which Equation~\eqref{bayes-first} follows. Similarly, we can show that:
        \[\inf\{q:B^+(H,E;q)\}=\frac{q_hq_{e|h}}{q_hq_{e|h}+p_{e|\neg h}(1-q_h)}.\]
    \end{proof}

\begin{example}
Consider a medical test for a disease. Let:
\begin{itemize}
    \item $H$: Hypothesis that the patient has the disease.
    \item $E$: Evidence that the test is positive.
\end{itemize}

\noindent
We have imprecise information:
\begin{enumerate}
    \item {Prior prevalence}: From epidemiological studies, the disease prevalence is estimated to be between 1\% and 5\%, but the exact value is uncertain.
    \item {Test sensitivity}: The probability that the test is positive given the disease is between 85\% and 95\%.
    \item {Test specificity}: The probability that the test is negative given no disease is between 90\% and 99\%.
\end{enumerate}

\noindent
In classical Bayesian analysis, point estimates are typically chosen:
\begin{align*}
    P(H) &= 0.03 \quad \text{(midpoint of 1\%-5\%)} \\
    P(E|H) &= 0.90 \quad \text{(midpoint of 85\%-95\%)} \\
    P(E|\neg H) &= 1 - 0.945 = 0.055 \quad \text{(where 94.5\% is midpoint of 90\%-99\%)}
\end{align*}

For a positive test result, the classical posterior is:
\begin{align*}
  P(H|E) &= \frac{P(H)P(E|H)}{P(H)P(E|H) + P(\neg H)P(E|\neg H)}\\
  &= \frac{0.03 \times 0.90}{0.03 \times 0.90 + 0.97 \times 0.055} = \frac{0.027}{0.027 + 0.05335} \approx 0.336.
\end{align*}
\noindent
Thus, the classical approach yields ${P_{\text{classical}}(H|E) \approx 33.6\%}$.

We now represent the imprecise information in the domain-theoretic framework as intervals:
\begin{align*}
    \sigma(H) &= [p_h, q_h] = [0.01, 0.05] \\
    \sigma(E|H) &= [p_{e|h}, q_{e|h}] = [0.85, 0.95] \\
    \sigma(E|\neg H) &= [p_{e|\neg h}, q_{e|\neg h}] = [0.01, 0.10] \end{align*}
    since specificity is [0.90, 0.99].

The interval posterior is given by:
\[
\sigma(H|E) = \left[ \frac{p_h p_{e|h}}{p_h p_{e|h} + q_{e|\neg h}(1-p_h)},\ 
                     \frac{q_h q_{e|h}}{q_h q_{e|h} + p_{e|\neg h}(1-q_h)} \right]
\]

\begin{enumerate}
    \item \textbf{Lower bound (most conservative)}:
    \begin{equation*}
        \text{Lower} = \frac{0.01 \times 0.85}{0.01 \times 0.85 + 0.10 \times (1-0.01)} 
        = \frac{0.0085}{0.1075} \approx 0.0791
    \end{equation*}
    
    \item \textbf{Upper bound (most optimistic)}:
    \begin{align*}
        \text{Upper} &= \frac{0.05 \times 0.95}{0.05 \times 0.95 + 0.01 \times (1-0.05)} = \frac{0.0475}{0.057} \approx 0.8333
    \end{align*}
\end{enumerate}

\noindent
Thus, the interval posterior is:
\[
{\sigma(H|E) = [0.0791, 0.8333] \approx [7.9\%, 83.3\%]}.
\]
Further comparisons between the interval and the classical Bayesian updating are presented in Table~\ref{table:classical_interval_Bays}.

\begin{table*}[t]
\centering
\begin{tabular}{@{}lll@{}}
\toprule
\textbf{Aspect} & \textbf{Classical Bayesian} & \textbf{Interval Bayesian} \\
\midrule
\textbf{Result} & Point: 33.6\% & Interval: [7.9\%, 83.3\%] \\
\textbf{Uncertainty representation} & Hidden in point estimate & Explicit in interval width \\
\textbf{Information used} & Midpoints only & Entire ranges \\
\textbf{Decision implications} & Clear-cut (33.6\%) & Ambiguous (could be 8\% or 83\%) \\
\textbf{Robustness} & Sensitive to midpoint choice & Robust to parameter variations within intervals \\
\bottomrule
\end{tabular}
\caption{Comparison of classical vs. interval Bayesian updating.}
\label{table:classical_interval_Bays}
\end{table*}

    \end{example}

\subsection{Bayesian Updating for Credal sets}
    
  Finally, for a compact metric space $X$, as with $\hat{\mathcal{E}}$ and $\hat{\mathcal{C}}$, we obtain the extension of $B$ on all its inputs:
  \[\hat{\mathcal{B}}:U(P(X))\to (\mathbb{E}(X))^3\to \mathbb{I}[0,1]\]
  with
  \begin{equation*}
    \hat{\mathcal{B}}(K,(H,E|H,E|\neg H))=\left[\inf_{\sigma\in K}\frac{p^\sigma_hp^\sigma_{e|h}}{p^\sigma_hp^\sigma_{e|h}+q^\sigma_{e|\neg h}(1-p^\sigma_h)},\sup_{\sigma\in K}\frac{q^\sigma_hq^\sigma_{e|h}}{q^\sigma_hq^\sigma_{e|h}+p^\sigma_{e|\neg h}(1-q^\sigma_h)}\right],
  \end{equation*}
  with the following
  notations: \[\sigma(H_1,H_2)=[p^\sigma_h,q^\sigma_h],\quad\sigma(E|(H_1,H_2))=[p^\sigma_{e|h},q^\sigma_{e|h}],
    \quad \sigma(E|(H_2,H_1))=[p^\sigma_{e|\neg h},q^\sigma_{e|\neg h}].\]
  For a convex polytope $K\in U(P(X))$ with a finite set $N\subseteq K$ of nodes, we have:

  \begin{equation}\label{credal-Bayesian}
    \hat{\mathcal{B}}(K,(H,E|H,E|\neg H)) =\left[\min_{\sigma\in N}\frac{p^\sigma_hp^\sigma_{e|h}}{p^\sigma_hp^\sigma_{e|h}+q^\sigma_{e|\neg h}(1-p^\sigma_h)},\max_{\sigma\in N}\frac{q^\sigma_hq^\sigma_{e|h}}{q^\sigma_hq^\sigma_{e|h}+p^\sigma_{e|\neg h}(1-q^\sigma_h)}\right].
  \end{equation}

\begin{example}
Consider a family of distributions $\sigma_{a,b}$ indexed by two parameters $a,b \in [0,1]$, where:
\begin{align*}
    \sigma_{a,b}(H) &= [0.1 + 0.2a,\; 0.3 + 0.2a], \\
    \sigma_{a,b}(E|H) &= [0.7 + 0.2b,\; 0.9 + 0.1b], \\
    \sigma_{a,b}(E|\neg H) &= [0.05 + 0.1b,\; 0.15 + 0.1b].
\end{align*}
Thus,
\begin{align*}
    p_h^{a,b} &= 0.1 + 0.2a, & q_h^{a,b} &= 0.3 + 0.2a, \\
    p_{e|h}^{a,b} &= 0.7 + 0.2b, & q_{e|h}^{a,b} &= 0.9 + 0.1b, \\
    p_{e|-h}^{a,b} &= 0.05 + 0.1b, & q_{e|-h}^{a,b} &= 0.15 + 0.1b.
\end{align*}
\noindent
Applying the interval Bayesian rule, we compute:

\paragraph{Lower bound:}
Minimisation over $a,b$ yields $a=0, b=1$:
\[
\frac{(0.1)(0.9)}{(0.1)(0.9) + (0.25)(0.9)} = \frac{0.09}{0.315} \approx 0.2857.
\]

\paragraph{Upper bound:}
Maximisation over $a,b$ yields $a=1, b=0$:
\[
\frac{(0.5)(0.9)}{(0.5)(0.9) + (0.05)(0.5)} = \frac{0.45}{0.475} \approx 0.9474.
\]
Thus, $\hat{B}(K, H, E) \approx [0.2857,\; 0.9474]$.

\paragraph{Classical Bayesian approach:}
In the classical setting, one must choose precise parameters. Taking midpoints $a=0.5, b=0.5$:
\begin{equation*}
        \sigma(H) \approx 0.3, \quad
    \sigma(E|H) \approx 0.875, \quad
    \sigma(E|\neg H) \approx 0.15.
\end{equation*}
Then:
\[
P(H|E) = \frac{0.3 \times 0.875}{0.3 \times 0.875 + 0.15 \times 0.7} \approx 0.7143\in [0.2857,\; 0.9474].
\]


The classical point estimate lies inside the interval, but fails to capture the full uncertainty. The interval width of $0.6617$ reflects substantial sensitivity to the parameters $a$ and $b$. In safety‑critical applications, this reveals that the posterior probability could be as low as $28.6\%$ or as high as $94.7\%$, information that is lost when a single precise parameter set is chosen arbitrarily.
\end{example}

\section{Extension to multi-dimensional case}
All the results in previous sections can be generalised to the multi-dimensional case. Let $D_1$ and $D_2$ be basic spaces. Then, $D_1\times D_2$ is again a basic space with a countable basis of its lattice of open sets given by $b_1\times b_2$ with $b_i\in B_i$ with $i=1,2$. 

The predicate $G$ extends to the product space $D_1\times D_2$ as $G(U\times V;p)$ with the intended meaning $p< \sigma(U\times V)$. With the same axioms interpreted more generally, we obtain the joint probability distribution on the product space $D_1\times D_2$. 
For $U\in \mathbb{E}(D_1)$ and $V\in \mathbb{E}(D_2)$ with joint distribution $\sigma$, the conditional probability of $U$ given $V$ now takes the form:
\[\mathcal{C}_\sigma(U,V)=\left[\frac{\sigma(U_1\times V_1)}{1-\sigma(V_2)},1-\frac{\sigma(U_2\times V_1)}{1-\sigma(V_2)} \right]\]
where $\sigma(V_2)=\sigma(D_1\times V_2)$ is the marginal distribution of $V_2$.
Again, the predicates $C^-$ and $C^+$ are in the more general form $C^-(U\otimes V,U'\otimes V'; p)$ and  $C^+(U\otimes V,U'\otimes V'; q)$ with the rules L1, L2, U1, and U2 in terms of the multi-dimensional predicates $G$, $C^-$, and $C^+$. Soundness and completeness theorems 
can be obtained as before.

\section{Conditional Independence}

In this section, let $U\in \mathbb{E}(D_1)$, $V\in \mathbb{E}(D_2)$, $W\in \mathbb{E}(D_3)$, and $Z\in \mathbb{E}(D_4)$ be events in the  basic spaces $D_i$, for $1\leq i\leq 4$.

For the three events $U=(U_1,U_2)$, $V=(V_1,V_2)$, and $W=(W_1,W_2)$, we define {\em the conditional independence of the event $U$ of $V$ given $W$}, with respect to their joint probability distribution. In order to be able to use the factorisation properties of conditional independence in classical probability theory, we will assume that the event $(W_1,W_2)\in \mathbb{E}(D_3)$ on which conditioning takes place is a classical event, i.e., $\sigma(W_1)+\sigma(W_2)=1$. We introduce the predicate $\mbox{CE}(W_1,W_2)$ with the intended meaning $\sigma(W_1)+\sigma(W_2)=1$. With the assumption $\mbox{CE}(W_1,W_2)$, the conditional probability simplifies to:
\begin{equation}
  \label{CE-sim}
  C(\sigma,(U_1,U_2),(W_1,W_2))= \left[ \frac{\sigma(U_1\cap W_1)}{1-\sigma(W_2)}, 1-\frac{ \sigma(U_2\cap W_1)}{1-\sigma(W_2)}\right]=\left[ \frac{\sigma(U_1\cap W_1)}{\sigma(W_1)}, 1-\frac{ \sigma(U_2\cap W_1)}{\sigma(W_1)}\right].
\end{equation}

This enables us to define: $U\dperp V \mid W$ if $U_1\bot V_1 \mid W_1$, i.e., if $U_1$ is independent of $V_1$ given $W_1$ as in classical probability theory. We now define the predicate $I(U\dperp V \mid W)$ with the  intended meaning of $U\dperp V \mid W$, i.e., $U_1\bot V_1 \mid W_1$, equivalently $\sigma(U_1\times V_1 \mid W_1)=\sigma(U_1 \mid W_1)\sigma(V_1 \mid W_1)$. This definition leads to the four rules of conditional independence extended to events. 

\begin{enumerate}
    \item[(CI1)] {\bf Symmetry:} $\operatorname{CE}(W) \wedge I(U\dperp V \mid W)\implies I(V\dperp U \mid W)$.

    \item[(CI2)] {\bf Weak union:} 
      $\operatorname{CE}(Z) \wedge \operatorname{CE}(W) \wedge
      I(U\dperp (V\otimes W) \mid Z) \implies I(U\dperp V \mid
      W\otimes Z)$.
     \item[(CI3)] {\bf Contraction:} $\operatorname{CE}(V) \wedge \operatorname{CE}(Z) \wedge I(U\dperp V  \mid Z) \wedge I(U\dperp W|V\otimes Z)
     \implies I(U\dperp V\otimes W \mid Z).$
\end{enumerate}
If the  joint probability in $D_1\otimes D_2\otimes D_3\otimes D_4$ is strictly positive for any open set $O\in \mathbb{B}_{D_1\otimes D_2\otimes D_3\otimes D_4}$, we will have the intersection property which is stronger than contraction. 
Thus, we can state:
\begin{enumerate}
    \item[(CI4)] {\bf Intersection:} For all $U\otimes V\otimes W\otimes Z\in B_{\mathbb{A}}$, where $\mathbb{A}=\mathbb{E}(D_1)\otimes \mathbb{E}(D_2)\otimes \mathbb{E}(D_3)\otimes \mathbb{E}(D_4)$:
    \begin{multline*}
      \Big(\operatorname{Pos}(U_1\times V_1\times W_1\times Z_1)
      \wedge \operatorname{CE}(V) \wedge \operatorname{CE}(W) \wedge
      \operatorname{CE}(Z) \wedge I(U\dperp V \mid W\otimes Z) \wedge
      I(U\dperp W \mid V\otimes Z) \Big) \\ \implies I(U\dperp V\otimes W \mid Z).  
    \end{multline*}
\end{enumerate}
We also need rules for the predicate $I(U\dperp V \mid W)$ as follows:
\begin{enumerate}
    \item[(CI5)] \[\operatorname{Pos}(W) \wedge \operatorname{CE}(W) \wedge I(U\dperp V \mid W) \wedge C^-(U,W;p_U) \wedge C^-(V,W;p_V)\]\[\implies C^-(U\otimes V,W;p_Up_V).\] 
    \item[(CI6)]
    \begin{multline*}
    \operatorname{Pos}(W)\wedge \operatorname{CE}(W)\wedge I(U\dperp V \mid W)\wedge  C^-(U\otimes V,W;p)\\    
    \implies \exists p_1,p_2\in \mathbb{Q}_0: \big( p<p_1p_2 \big) \wedge C^-(U,W;p_1)\wedge C^-( V,W;p_2).
    \end{multline*}  
\end{enumerate}

The rules (CI5) and (CI6) ensure that our predicates correctly compute the left-end point of $C^-(U\otimes V,W;p_Up_V)$ based on Equation~(\ref{CE-sim}) as the product of the left-end points of $\mathcal{C}_\sigma(U,W)$ and $\mathcal{C}_\sigma(V,W)$. 

It remains to handle the right-end point soundly. We have:
\begin{equation}\label{rep}\mathcal{C}_\sigma^+(U\otimes V \mid W)=
  1-\sigma((U_2\times D_V) \cup (D_U\times V_2) \mid  W_1)
  =1-(\sigma(U_2 \mid W_1)+\sigma(V_2 \mid W_1))+\sigma(U_2\times V_2
  \mid W_1))
\end{equation}

We can use Fr{\'e}chet's rule to conservatively estimate the right end point.

\begin{lemma}
  \label{lemma:Frechet}
  If $\operatorname{CE}(W)$, then the Fr{\'e}chet rule
  implies:\[\mathcal{C}_\sigma^+(U\otimes V \mid W)\leq
    \min\{1-\sigma(U_2 \mid W_1),1-\sigma(V_2 \mid W_1)\}.\]
\end{lemma}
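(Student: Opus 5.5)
The plan is to start from the representation of the right end point in Equation~(\ref{rep}) and bound it using monotonicity of the conditional measure $\sigma(\cdot \mid W_1)$. Under $\operatorname{CE}(W)$ we have $1-\sigma(W_2)=\sigma(W_1)$. By Equation~(\ref{CE-sim}) (and its multi-dimensional form), the right end point of the conditional probability of $U\otimes V$ given $W$ is therefore
\[1-\sigma\bigl(((U_2\times D_V)\cup(D_U\times V_2))\cap W_1\bigr)/\sigma(W_1).\]
Here we use the definition of $\otimes$ on events, whose second component is $(U_2\times D_V)\cup(D_U\times V_2)$. Since $\operatorname{Pos}(W)$ holds whenever the conditional is defined, $\sigma(\cdot\mid W_1)=\sigma(\cdot\cap W_1)/\sigma(W_1)$ is a genuine probability valuation, restricted to opens of the product space. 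So the classical Fréchet bounds for it are available.

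The key step is the Fréchet lower bound for a union:
\[\sigma(A\cup B\mid W_1)\geq \max\{\sigma(A\mid W_1),\sigma(B\mid W_1)\}.\]
We apply it with $A=U_2\times D_V$ and $B=D_U\times V_2$. This follows from monotonicity of the valuation, since $A\subseteq A\cup B$ and $B\subseteq A\cup B$ are open inclusions. Equivalently, via Equation~(\ref{rep}) and modularity, it amounts to the Fréchet inequality $\sigma(U_2\times V_2\mid W_1)\leq\min\{\sigma(U_2\mid W_1),\sigma(V_2\mid W_1)\}$, where $\sigma(U_2\mid W_1)$ denotes the marginal $\sigma(U_2\times D_V\mid W_1)$. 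Substituting into $1-(\sigma(U_2\mid W_1)+\sigma(V_2\mid W_1))+\sigma(U_2\times V_2\mid W_1)$ gives
\[\mathcal{C}^+_\sigma(U\otimes V\mid W)\leq 1-\max\{\sigma(U_2\mid W_1),\sigma(V_2\mid W_1)\}=\min\{1-\sigma(U_2\mid W_1),1-\sigma(V_2\mid W_1)\}.\]

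The only point needing care is the bookkeeping of marginals in the product space. One must identify $\sigma(U_2\mid W_1)$ with $\sigma((U_2\times D_V)\cap W_1)/\sigma(W_1)$, with $W_1$ read as a cylinder in the joint space. One must also check that the denominators coincide, which is exactly where $\operatorname{CE}(W)$ is used: it replaces $1-\sigma(W_2)$ by $\sigma(W_1)$, so that each term is a normalised conditional of the same valuation. Once these identifications are fixed, the inequality is immediate from monotonicity and modularity of continuous valuations.
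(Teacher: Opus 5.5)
Your argument is correct and is essentially the paper's proof: under $\operatorname{CE}(W)$, both reduce the right end point to $1-\sigma\bigl((U_2\times D_V)\cup(D_U\times V_2)\mid W_1\bigr)$ and bound it by a Fr\'echet inequality. The paper expands by inclusion--exclusion and applies the intersection form $\sigma(U_2\times V_2\mid W_1)\leq\min\{\sigma(U_2\mid W_1),\sigma(V_2\mid W_1)\}$, which, as you note, is equivalent via modularity to your more direct monotonicity bound on the union.
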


\begin{proof}
As we have $\operatorname{CE}(W)$, we obtain
\[\sigma(U \mid W)=\mathcal{C}_\sigma(U,W)=[\sigma(U_1 \mid W_1), 1-\sigma(U_2 \mid W_1)],\]
\[\sigma(V \mid W)=\mathcal{C}_\sigma(V,W)=[\sigma(V_1 \mid W_1), 1-\sigma(V_2 \mid W_1)].\]
By assuming $\operatorname{CE}(W)$, since $U\otimes V=(U_1\times V_1, (U_2\times  D_V)\cup (D_U\times V_2))$, we have:
\begin{equation}
\label{conditional-product}\mathcal{C}_\sigma(U\otimes V,W)=  \left[\sigma(U_1\times V_1 \mid  W_1),1-\sigma((U_2\times D_V) \cup (D_U\times V_2) \mid  W_1)\right].
\end{equation}
The last equality yields, using $q_U:=1-\sigma(U_2 \mid W_1)$ and $q_V:=1-\sigma(V_2 \mid W_1)$:
\[\begin{array}{ll}\mathcal{C}_\sigma^+(U\otimes V \mid W)\\
= 1-(\sigma(U_2 \mid W_1)+\sigma(V_2 \mid W_1)+\sigma(U_2\times V_2 \mid W_1))\\
=1-(1-q_U+1-q_V)+\sigma(U_2\times V_2 \mid W_1))\\=q_U+q_V-1+\sigma(U_2\times V_2 \mid W_1)\\
\leq q_U+q_V-1+\min(\sigma(U_2 \mid W_1),\sigma(V_2 \mid W_1))&(\text{Fr{\'e}chet's rule})\\
=q_U+q_V-1+\min(1-q_U,1-q_V)\\
=q_U+q_V-\max(q_U,q_V)\\
=\min{(q_U,q_V)}\end{array}\]
where we have used Fr{\'e}chet's rule \[\sigma(U_2\times V_2 \mid W_1)\leq \min(\sigma(U_2 \mid W_1),\sigma(V_2 \mid W_1)).\]
\end{proof}

We thus include two new rules to compute the right-end point of $\mathcal{C}_\sigma(U\otimes V \mid W)$ as follows:

\begin{enumerate}
     \item[(CI7)] \[\operatorname{Pos}(W) \wedge  \operatorname{CE}(W) \wedge I(U\dperp V \mid W) \wedge C^+(U,W;q_U) \wedge  C^+(V,W;q_V)\]\[\implies C^+(U\otimes V,W;q),\] 
    where $q=\min(q_U,q_V)$.
    \item[(CI8)] 
    \begin{multline*}
      \operatorname{Pos}(W)\wedge \operatorname{CE}(W)\wedge I(U\dperp V \mid W)\wedge  C^+(U\otimes V,W;q)  \\
      \implies 
      \exists q_1,q_2\in \mathbb{Q}_0: \big( \min(q_1,q_2)<q \big) \wedge C^+(U,W;q_1)\wedge C^+( V,W;q_2).
    \end{multline*}    
\end{enumerate}

We can then prove:

\begin{theorem}
\label{s-c-c-i}
We have soundness and completeness for conditional independence
\end{theorem}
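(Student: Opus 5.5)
The plan splits Theorem~\ref{s-c-c-i} into soundness of (CI1)--(CI8) under the intended interpretation, and completeness of the predicates $C^-$, $C^+$ for $U\otimes V$ given $W$, in the same sense as the completeness theorem for conditional probability and Theorem~\ref{bayes-complete}. Throughout, $\operatorname{CE}(W)$ holds, so by Equation~\eqref{CE-sim} every conditional interval has left end point $\sigma(\cdot\cap W_1)/\sigma(W_1)$, and $I(U\dperp V\mid W)$ means $\sigma(U_1\times V_1\mid W_1)=\sigma(U_1\mid W_1)\sigma(V_1\mid W_1)$.

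\textbf{Soundness.} For (CI1)--(CI4), each rule is the classical graphoid axiom (symmetry, weak union, contraction, intersection) applied to the first components of the events. The $\operatorname{CE}$ hypotheses make the conditioning events $W$, $Z$, $V\otimes Z$, $W\otimes Z$ classical. For products this uses that $\operatorname{CE}(V)\wedge\operatorname{CE}(Z)$ gives $\operatorname{CE}(V\otimes Z)$: the second component of $V\otimes Z$ is $(V_2\times D)\cup(D\times Z_2)$, which has measure $1-\sigma(V_1\times Z_1)$ once $V_2,Z_2$ are complements of $V_1,Z_1$ up to null sets. For (CI4), $\operatorname{Pos}$ supplies the strict positivity needed for the classical intersection property. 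For (CI5), if $p_U<\sigma(U_1\mid W_1)$ and $p_V<\sigma(V_1\mid W_1)$, then independence gives $p_Up_V<\sigma(U_1\times V_1\mid W_1)$. For (CI6), take $p<\sigma(U_1\mid W_1)\sigma(V_1\mid W_1)$ and choose rationals $p_1,p_2$ just below the two factors with $p<p_1p_2$, using continuity of multiplication. For (CI7), Lemma~\ref{lemma:Frechet} gives $\mathcal{C}^+_\sigma(U\otimes V\mid W)\le\min(q_U,q_V)$, and $q_U,q_V$ exceed the true upper end points. For (CI8), if the upper end point of $U\otimes V$ is $<q$, we need rationals $q_1,q_2$ above the individual upper end points with $\min(q_1,q_2)<q$. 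This requires $\min(1-\sigma(U_2\mid W_1),1-\sigma(V_2\mid W_1))<q$.

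\textbf{Completeness.} For the left end point, (CI5) and density of the rationals give $\sup\{p: C^-(U\otimes V,W;p)\}\ge\sigma(U_1\mid W_1)\sigma(V_1\mid W_1)$. Conversely, (CI6) combined with the already established completeness of $C^-$ for single events bounds any derivable $p$ by the product. For the right end point, (CI7) and (CI8) similarly show that $\inf\{q:C^+(U\otimes V,W;q)\}=\min(q_U,q_V)$, the Fréchet envelope. Completeness then means relative to this conservative estimate, which is the intended meaning once the right end point is interpreted via Lemma~\ref{lemma:Frechet}.

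\textbf{Main obstacle.} The hard step is reconciling (CI8) with the true right end point $q_U+q_V-1+\sigma(U_2\times V_2\mid W_1)$, which can be strictly smaller than $\min(q_U,q_V)$. Then (CI8) would be unsound for $q$ in between. I would first try to exploit independence, since under $\operatorname{CE}(U)$ and $\operatorname{CE}(V)$ independence of the first components transfers to the complements, so that $\sigma(U_2\times V_2\mid W_1)=(1-q_U)(1-q_V)$ and the right end point becomes $q_Uq_V$. If those hypotheses are unavailable, I would state completeness explicitly relative to the Fréchet bound, treating $C^+$ for products as denoting that sound upper estimate.
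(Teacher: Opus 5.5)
Your completeness argument is the paper's. (CI5)/(CI6) give $c^-=a^-b^-$; you attribute the two inequalities correctly, whereas the paper's text cites (CI6) and (CI7). (CI7)/(CI8) give $c^+=\min(a^+,b^+)$, which the paper also only obtains ``with the Fr\'echet conservative estimate'', i.e.\ by your fallback reading of $C^+$ on products.

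Your diagnosis that (CI8) is unsound for the true end point is correct, but neither of your ways out repairs it.
\begin{itemize}
\item The ``first try'' refutes (CI8) rather than rescuing it. Under $\operatorname{CE}(U)\wedge\operatorname{CE}(V)$ the true end point is $q_Uq_V$, which is below $\min(q_U,q_V)$ whenever $0<q_U,q_V<1$. For a rational $q$ in between, the premises of (CI8) hold and its conclusion fails. The rules that are sound there are (SI9)/(SI10).
\item The fallback does not stay confined to products. (U1)/(U2), which the multi-dimensional section applies to product events, already force $\inf\{q: C^+(U\otimes V,W;q)\}$ to be the true end point $q_U+q_V-1+\sigma(U_2\times V_2\mid W_1)$. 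Reading $C^+$ on products as the Fr\'echet bound therefore makes (U1) unsound instead.
\end{itemize}
A correct theorem must either drop (CI8), keeping the sound one-sided (CI7), or exclude (U1)/(U2) for product events and declare that $C^+(U\otimes V,W;\cdot)$ denotes the Fr\'echet estimate.

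Separately, your soundness argument for (CI2) and (CI4), which is also the paper's, fails. Weak union and intersection are graphoid axioms for independence of generated $\sigma$-algebras, conditioned on every cell of the conditioning partition. Under the single-event reading you fix at the start, $I(U\dperp V\otimes W\mid Z)$ only factorises $U_1$ against the single event $V_1\times W_1$ given the single event $Z_1$.

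Here is a counterexample for both rules. Take $D_1=D_2=D_3=\{0,1\}$ with coordinates $x_1,x_2,x_3$, let $D_4$ be a point and $Z=(D_4,\emptyset)$. Let $U,V,W$ each be $(\{1\},\{0\})$ in its own space, so every $\operatorname{CE}$ holds. Let $(x_2,x_3)$ be uniform and set $\sigma(x_1=1\mid x_2=b,x_3=c)=\alpha_{bc}$.
\begin{itemize}
\item For (CI2), take $\alpha_{11}=\alpha_{00}=\tfrac12$, $\alpha_{10}=0.1$, $\alpha_{01}=0.9$. Then $\sigma(U_1)=\tfrac12=\sigma(U_1\mid V_1\times W_1)$, so the premise of (CI2) holds. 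But $\sigma(U_1\mid W_1)=0.7\neq\tfrac12$, so $I(U\dperp V\mid W\otimes Z)$ fails.
\item For (CI4), take $\alpha_{11}=\alpha_{10}=\alpha_{01}=\tfrac12$, $\alpha_{00}=0.9$, so every atom is positive. Both premises of (CI4) hold, yet $\sigma(U_1)=0.6\neq\tfrac12=\sigma(U_1\mid V_1\times W_1)$.
\end{itemize}

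Symmetry and contraction do hold for single events. Contraction follows from the computation $\sigma(U_1\times V_1\times W_1\mid Z_1)=\sigma(U_1\mid V_1\times Z_1)\,\sigma(V_1\times W_1\mid Z_1)=\sigma(U_1\mid Z_1)\,\sigma(V_1\times W_1\mid Z_1)$. So soundness has to be checked rule by rule, and (CI2) and (CI4) survive only if $I$ is strengthened to $\sigma$-algebra independence.
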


\begin{proof}
Soundness follows from the fact that all the rules are sound rules in classical probability theory. To verify completeness,  the conditional probability of $U$ given $W$ in Section 4, when $1-\sigma(W_2)=\sigma(W_1)>0$, satisfies:
\begin{equation}
\label{CP-1}
   a^-:= \frac{\sigma(U_1\times W_1)}{\sigma(W_1)}=\sup\{p: C^-(U,W;p)\}.
\end{equation}
Similarly, for the conditional probability of $V$ given $W$ and that of $U\otimes V$ given $W$, we have, respectively:
\begin{equation}\label{CP-2}b^-:=\frac{\sigma(V_1\times W_1)}{\sigma(W_1)}=\sup\{p: C^-(V,W;p)\},
\end{equation}
\begin{equation}
\label{CP-3}
c^-:=\frac{\sigma(U_1\times V_1\times W_1)}{\sigma(W_1)}=\sup\{p: C^-(U\otimes V,W;p)\}.
\end{equation}
Thus, we need to show that the product of the RHS of Equations~(\ref{CP-1}) and~(\ref{CP-2}) is the same as the RHS of Equation~(\ref{CP-3}), i.e., $a^-b^-=c^-$. From (CI6) it follows that $a^-b^-\leq c^-$, while from (CI7) 
it follows that $c^-\leq a^-b^-$.
Next we consider the right end point. 
\begin{equation}
\label{CP-1+}
   a^+:= 1-\frac{\sigma(U_2\times W_1)}{\sigma(W_1)}=1-\sigma(U_2 \mid W_1)=\inf\{q: C^+(U,W;q)\},
   \end{equation}
\begin{equation}
\label{CP-2+}b^+:=1-\frac{\sigma(V_2\times W_1)}{\sigma(W_1)}=1-\sigma(V_2 \mid W_1)=\inf\{q: C^+(V,W;q)\},
\end{equation}
\begin{equation}
c^+:=\inf\{q: C^+(U\otimes V,W;q)\}.
\end{equation}
With the Fr{\'e}chet conservative estimate, we need to check that $\min(a^+,b^+)=c^+$. From (CI7), it follows that $c^+\leq \min(a^+,b^+)$, whereas (CI8) implies that $ \min(a^+,b^+)\leq c^+$.
\end{proof}

\subsection{Strong conditional independence}

 Earlier in this section, we saw that classical conditional independence implies the lower conditional support factorises when two independent events $U$ and $V$ are conditioned to $W$. In this domain-theoretic setting we have additional information given by the upper conditional support. 

 \begin{lemma} Assuming $\operatorname{Pos}(W_1)$, we have:
    \[\mathcal{C}_\sigma(U\otimes V,W)=\mathcal{C}_\sigma(U,W) \cdot\mathcal{C}_\sigma(V,W) \iff\]
    
    \begin{itemize}
        \item $\sigma(U_1\times V_1 \mid W_1)=\sigma(U_1 \mid W_1)\sigma(V_1 \mid W_1)$.
        \item $\sigma(U_2\times V_2 \mid W_1)=\sigma(U_2 \mid W_1)\sigma(V_2 \mid W_1)$.
    \end{itemize}
 \end{lemma}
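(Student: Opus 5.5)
The plan is to compare the two sides of the claimed equation endpoint by endpoint, using the explicit formula (\ref{conditional-product}) for $\mathcal{C}_\sigma(U\otimes V,W)$ and the formula for $\mathcal{C}_\sigma(U,W)$, $\mathcal{C}_\sigma(V,W)$ from Section~\ref{cond-prob}. First I fix the meaning of the product of intervals. All intervals lie in $\mathbb{I}[0,1]$, so both endpoints are non-negative, and the product is the interval extension of multiplication:
\[
[a,b]\cdot[c,d]=[ac,bd].
\]
Two intervals are equal if and only if their left endpoints agree and their right endpoints agree. The equation therefore splits into two scalar equations, and I will show the first is equivalent to the first bullet and the second to the second bullet.

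Throughout, write $\sigma(A\mid W_1)$ for $\sigma(A\cap W_1)/(1-\sigma(W_2))$, which is the normalisation used in Equation~\ref{cond-prob-map-events}. Under $\operatorname{CE}(W)$ this is the classical conditional probability, as in Equation~(\ref{CE-sim}). The hypothesis $\operatorname{Pos}(W_1)$ guarantees that the denominator is positive, so every quantity below is well defined.

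\textbf{Left endpoint.} The left endpoint of $\mathcal{C}_\sigma(U\otimes V,W)$ is $\sigma(U_1\times V_1\mid W_1)$, and the product of the left endpoints on the right-hand side is $\sigma(U_1\mid W_1)\,\sigma(V_1\mid W_1)$. Equality of left endpoints is therefore literally the first bullet, and nothing further is needed.

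\textbf{Right endpoint.} The right endpoint of $\mathcal{C}_\sigma(U\otimes V,W)$ is
\[
1-\sigma\bigl((U_2\times D_V)\cup(D_U\times V_2)\mid W_1\bigr).
\]
I expand this using the modularity axiom of the continuous valuation $\sigma$, applied to the two open sets $(U_2\times D_V)\cap W_1$ and $(D_U\times V_2)\cap W_1$. Their intersection is $(U_2\times V_2)\cap W_1$, and dividing by the common positive denominator gives, as in (\ref{rep}),
\[
1-\sigma(U_2\mid W_1)-\sigma(V_2\mid W_1)+\sigma(U_2\times V_2\mid W_1).
\]
The product of the right endpoints on the right-hand side is
\[
\bigl(1-\sigma(U_2\mid W_1)\bigr)\bigl(1-\sigma(V_2\mid W_1)\bigr)=1-\sigma(U_2\mid W_1)-\sigma(V_2\mid W_1)+\sigma(U_2\mid W_1)\,\sigma(V_2\mid W_1).
\]
Subtracting the two expressions, the terms $1-\sigma(U_2\mid W_1)-\sigma(V_2\mid W_1)$ cancel. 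Hence the right endpoints agree if and only if $\sigma(U_2\times V_2\mid W_1)=\sigma(U_2\mid W_1)\,\sigma(V_2\mid W_1)$, which is the second bullet.

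Combining the two endpoint equivalences gives both directions of the lemma at once.

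The main obstacle is bookkeeping rather than depth, in two places:
\begin{enumerate}
\item The normalisations must match on both sides. Every term has to be divided by the same $1-\sigma(W_2)$; otherwise the product of two normalised quantities would carry the square of the denominator. This is why I define $\sigma(\cdot\mid W_1)$ uniformly with that denominator.
\item Modularity must be applied to genuinely open sets in the product space, together with the identity $(U_2\times D_V)\cap(D_U\times V_2)=U_2\times V_2$.
\end{enumerate}
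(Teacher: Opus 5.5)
Your proof is correct and follows essentially the same route as the paper: the left endpoints agree exactly when the first condition holds, and for the right endpoints you expand by modularity (the paper's Equation~(\ref{rep})) and cancel the common terms against the expanded product $(1-\sigma(U_2\mid W_1))(1-\sigma(V_2\mid W_1))$. Your explicit choice of the interval product $[a,b]\cdot[c,d]=[ac,bd]$ and of the uniform normalisation $1-\sigma(W_2)$ is a useful clarification. The paper leaves both implicit, relying on the section's standing assumption $\operatorname{CE}(W)$.
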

 \begin{proof}
    We already know that the equality of the left-end point of the interval equation is given by the first condition. For the equality of the right-end point, from Equation~\eqref{rep}, we obtain:
    
    \[1-(\sigma(U_2 \mid W_1)+\sigma(V_2 \mid W_1))+\sigma(U_2\times V_2 \mid W_1))\]\[=(1-\sigma(U_2 \mid W_1))(1-\sigma(V_2 \mid W_1)\]
    which is equivalent to $\sigma(U_2\times V_2 \mid W_1)=\sigma(U_2 \mid W_1)\sigma(V_2 \mid W_1)$.    
 \end{proof}

 \begin{definition}
 \label{def:strong_cond_ind}
    We say the event $U$ is {\em strongly conditionally independent}, of $V$ given $W$, denoted as $U\dperp_s V \mid W$, if the following two conditions hold: \begin{itemize}
        \item $\sigma(U_1\times V_1 \mid W_1)=\sigma(U_1 \mid W_1)\sigma(V_1  \mid W_1)$.
        \item $\sigma(U_2\times V_2 \mid W_1)=\sigma(U_2 \mid W_1)\sigma(V_2 \mid W_1)$.
    \end{itemize}
 \end{definition}
 The additional assumption, on the right-end point, in strong conditional independence is restrictive and is not likely to hold in many applications. However, it leads to computational efficiency, as both end-points of a conditional probability would be obtained by taking the products of corresponding end-points. One can think of strong conditional independence as providing an optimistic rule in graphical models to compute the right-end points of conditional probabilities.

 For strong conditional independence, we have two additional rules which replace (CI7) and (CI8):
 \begin{enumerate}
     \item[(SI9)] \[\operatorname{Pos}(W) \wedge  \operatorname{CE}(W) \wedge  I(U\dperp_s V \mid W) \wedge C^+(U,W;q_U) \wedge C^+(V,W;q_V)\]\[\implies C^+(U\otimes V,W;q_Uq_V).\]

  \item[(SI10)] 
  \begin{multline*}
    \operatorname{Pos}(W) \wedge  \operatorname{CE}(W) \wedge I(U\dperp_s V \mid W) \wedge C^+(U\otimes V,W;q)  \\
    \implies 
    \exists q_1,q_2\in \mathbb{Q}_0: q_1q_2<q\wedge C^+(U,W;q_1)\wedge C^+( V,W;q_2).
  \end{multline*}
 \end{enumerate}
 We can then deduce as in Theorem~\ref{s-c-c-i} that $a^+b^+=c_s^+$, where we have defined:

\begin{equation*}
   a^+:=1- \frac{\sigma(U_2\times W_1)}{\sigma(W_1)}=\inf\{q: C^+(U,W;q)\},
\end{equation*}
\begin{equation*}    
b^+:=1-\frac{\sigma(V_2\times W_1)}{\sigma(W_1)}=\inf\{q: C^+(V,W;q)\},  
\end{equation*}

\begin{align*}
c_s^+ &:=1-\sigma(U_2 \mid W_1)-\sigma(V_2 \mid W_1)+\sigma(U_2\times V_2 \mid W_1) \\
&=\inf\{q: C^+(U\otimes V,W;q)\}.
\end{align*}

The proof of the following theorem is straightforward:
\begin{theorem}
Soundness and completeness hold for strong conditional independence.
\end{theorem}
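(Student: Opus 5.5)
The plan mirrors the proof of Theorem~\ref{s-c-c-i}, with the right-end rules (CI7)--(CI8) replaced by (SI9)--(SI10). The left-end analysis carries over unchanged, because strong conditional independence includes ordinary conditional independence as its first condition. So I would treat $I(U\dperp_s V\mid W)$ as implying $I(U\dperp V\mid W)$. Then (CI5) and (CI6), together with the completeness theorem for conditional probability, give $a^-b^-=c^-$ exactly as before.

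\emph{Soundness.} Assume $\operatorname{CE}(W)$ and $\operatorname{Pos}(W)$. By Equation~\eqref{rep}, $c_s^+=1-\sigma(U_2\mid W_1)-\sigma(V_2\mid W_1)+\sigma(U_2\times V_2\mid W_1)$. The second clause of Definition~\ref{def:strong_cond_ind} turns the last term into $\sigma(U_2\mid W_1)\sigma(V_2\mid W_1)$, so $c_s^+=(1-\sigma(U_2\mid W_1))(1-\sigma(V_2\mid W_1))=a^+b^+$. This is precisely the right-end-point identity of the preceding lemma.
\begin{itemize}
\item (SI9): if $a^+<q_U$ and $b^+<q_V$, then $c_s^+=a^+b^+<q_Uq_V$, since all quantities lie in $[0,1]$ and at least one inequality is strict with the other factor nonnegative.
\item (SI10): given $c_s^+=a^+b^+<q$, continuity of multiplication lets us choose rationals $q_1>a^+$ and $q_2>b^+$ in $\mathbb{Q}_0$ with $q_1q_2<q$. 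The intended meaning of $C^+$ then yields $C^+(U,W;q_1)$ and $C^+(V,W;q_2)$.
\end{itemize}

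\emph{Completeness.} By the completeness theorem for conditional probability, $a^+=\inf\{q:C^+(U,W;q)\}$ and $b^+=\inf\{q:C^+(V,W;q)\}$, and $c_s^+$ is defined as $\inf\{q:C^+(U\otimes V,W;q)\}$.
\begin{itemize}
\item Upper bound from (SI9): for any $q_U>a^+$ and $q_V>b^+$ we get $C^+(U\otimes V,W;q_Uq_V)$. Letting $q_U\downarrow a^+$ and $q_V\downarrow b^+$ gives $c_s^+\le a^+b^+$.
\item Lower bound from (SI10): for every $q$ with $C^+(U\otimes V,W;q)$ there are $q_1,q_2$ with $C^+(U,W;q_1)$, $C^+(V,W;q_2)$ and $q_1q_2<q$. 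Since $q_1\ge a^+$ and $q_2\ge b^+$, monotonicity of multiplication on $[0,1]$ gives $a^+b^+\le q_1q_2<q$. Hence $a^+b^+\le c_s^+$.
\end{itemize}

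The main obstacle is a boundary technicality. Intervals in $\mathbb{Q}_0=(0,1)$ may not contain a rational strictly above $a^+$ when $a^+=1$, or strictly below $c_s^+$ when it is $0$. I would handle this as in the earlier completeness proof: either allow endpoints in $\mathbb{Q}\cap[0,1]$, or read the infimum over the empty set as $1$, which makes the equalities hold trivially in those edge cases. Apart from this, the argument is the same algebra as in Theorem~\ref{s-c-c-i}, with $\min$ replaced by product.
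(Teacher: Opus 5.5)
Your proposal is correct and follows the paper's own route: the paper only says that one deduces $a^+b^+=c_s^+$ as in Theorem~\ref{s-c-c-i}, with (SI9)/(SI10) replacing (CI7)/(CI8) and the left end carried over unchanged, and you supply the details, including the soundness check via Equation~\eqref{rep} and the second clause of Definition~\ref{def:strong_cond_ind}. One correction to your boundary remark, a defect the paper's formulation shares: if exactly one of $a^+,b^+$ equals $1$ (e.g.\ $U_2=\emptyset$), then (SI9) never fires, and (SI10) is unsound there and, if imposed, forces $\{q: C^+(U\otimes V,W;q)\}=\emptyset$, so reading $\inf\emptyset=1$ gives $c_s^+=1\neq a^+b^+=b^+$ rather than a trivial equality; thresholds in $\mathbb{Q}\cap[0,1]$ do not help either, since you would need $q_1>1$, and the clean fix is to let the upper thresholds range over all positive rationals.
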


\begin{example}
    Consider three events in separate spaces:
\begin{itemize}
    \item $U$: Patient has symptom A, with $U_1 = \text{"has A"}$, $U_2 = \text{"no A"}$
    \item $V$: Patient has symptom B, with $V_1 = \text{"has B"}$, $V_2 = \text{"no B"}$
    \item $W$: Patient has disease D, with $W_1 = \text{"has D"}$, $W_2 = \text{"no D"}$
\end{itemize}

We know:
\begin{align*}
    \sigma(W_1) &= [0.3, 0.4] \quad \text{(30-40\% prevalence)} \\
    \sigma(U_1 \mid W_1) &= [0.6, 0.8] \quad \text{(60-80\% sensitivity for symptom A)} \\
    \sigma(V_1 \mid W_1) &= [0.7, 0.9] \quad \text{(70-90\% sensitivity for symptom B)} \\
    \sigma(U_2 \mid W_1) &= [0.2, 0.4] \quad \text{(20-40\% specificity for symptom A)} \\
    \sigma(V_2 \mid W_1) &= [0.1, 0.3] \quad \text{(10-30\% specificity for symptom B)}
\end{align*}

We assume $U$ and $V$ are conditionally independent given $W$, but we have two different interpretations of what this means in the interval framework.

\subsubsection*{Classical Approach}
Take midpoint values:
\begin{align*}
    P(W_1) &= 0.35 \\
    P(U_1 \mid W_1) &= 0.7, \quad P(V_1 \mid W_1) = 0.8 \\
    P(U_2 \mid W_1) &= 0.3, \quad P(V_2 \mid W_1) = 0.2
\end{align*}

Assuming conditional independence:
\begin{align*}
    P(U_1 \cap V_1  \mid  W_1) &= 0.7 \times 0.8 = 0.56 \\
    P(U_2 \cap V_2  \mid  W_1) &= 0.3 \times 0.2 = 0.06 \\
    P((U_2 \times D_V) \cup (D_U \times V_2)  \mid  W_1) &= 0.3 + 0.2 - 0.06 = 0.44
\end{align*}

Thus:
\[
C_{\text{classical}}(U \otimes V, W) = [0.56, 1 - 0.44] = [0.56, 0.56]
\]

\subsubsection*{Interval Approach with Fr\'echet Rule}
Using conservative propagation (Rules I10, I11):
\begin{align*}
    C^{-}(U \otimes V, W) &= [0.6 \times 0.7, \ 0.8 \times 0.9] = [0.42, 0.72] \\
    C^{+}(U \otimes V, W) &= [\min(0.4, 0.3), \ \min(0.8, 0.9)] = [0.3, 0.8]
\end{align*}

By Fr\'echet's inequality:
\begin{equation*}
C^{+}_{\text{Fr\'echet}}(U \otimes V, W) =
1 - [\sigma(U_2 \mid W_1) + \sigma(V_2 \mid W_1) - \min(\sigma(U_2 \mid W_1), \sigma(V_2 \mid W_1))].  
\end{equation*}
Taking bounds:
\begin{align*}
\text{Lower bound for } C^{+} &= 1 - (0.4 + 0.3 - \min(0.4, 0.3))\\
&= 1 - (0.7 - 0.3) = 0.6 \\
\text{Upper bound for } C^{+} &= 1 - (0.2 + 0.1 - \min(0.2, 0.1)) \\
&= 1 - (0.3 - 0.1) = 0.8
\end{align*}
Thus:
\[
C_{\text{Fr\'echet}}(U \otimes V, W) = [0.42, 0.80]
\]

\subsubsection*{Interval Approach with Strong Conditional Independence}
Assuming both positive and negative evidence factorize (Definition 7.4):
\begin{align*}
    C^{-}_{\text{strong}}(U \otimes V, W) &= [0.6 \times 0.7, \ 0.8 \times 0.9] = [0.42, 0.72] \\
    C^{+}_{\text{strong}}(U \otimes V, W) &= [0.4 \times 0.3, \ 0.8 \times 0.9] = [0.12, 0.72]
\end{align*}

Thus:
\[
C_{\text{strong}}(U \otimes V, W) = [0.42, 0.72].
\]

\begin{table*}[t]
\centering
\begin{tabular}{@{}lccc@{}}
\toprule
\textbf{Approach} & \textbf{Interval} & \textbf{Width} & \textbf{Contains Classical?} \\
\midrule
Classical & [0.56, 0.56] & 0.00 & Yes (by definition) \\
Fr\'echet Rule & [0.42, 0.80] & 0.38 & Yes (0.56 in [0.42, 0.80]) \\
Strong Independence & [0.42, 0.72] & 0.30 & Yes (0.56 in [0.42, 0.72]) \\
\bottomrule
\end{tabular}
\caption{Comparison of conditional independence treatments}
\label{table:comparison_cond_ind}
\end{table*}


For comparison of various approaches, see Table.~\ref{table:comparison_cond_ind}. The key differences are:
\begin{itemize}
    \item Classical: Point estimate (0.56) assumes precise knowledge and perfect factorization.
    \item Fr\'echet: Conservative interval [0.42, 0.80] guarantees containment but is wide (0.38 width).
    \item Strong: Narrower interval [0.42, 0.72] (0.30 width) but requires strong factorization assumption.
\end{itemize}
\noindent
Regarding practical implications, we have:
\begin{itemize}
    \item Diagnosis: If we need $>0.7$ probability to diagnose:
    \begin{itemize}
        \item Classical: No (0.56 < 0.7)
        \item Fr\'echet: Maybe (0.42-0.80 includes >0.7)
        \item Strong: Maybe (0.42-0.72 includes >0.7)
    \end{itemize}
    \item Safety: Fr\'echet is safer (always contains true probability).
    \item Efficiency: Strong independence is more efficient (narrower intervals).
\end{itemize}
\noindent
Finally, regarding when to use each method:
\begin{itemize}
    \item Fr\'echet Rule: Safety-critical applications, unknown dependencies, conservative risk assessment.
    \item Strong Independence: When independence of negative evidence is justified, efficiency is priority.
    \item Classical: When parameters are precisely known and independence assumptions are well-validated.
\end{itemize}
\end{example}

\section{Iterated Function Systems with Imprecise Probabilities}

In this section, we continue to bridge concepts from classical capacity theory \cite{Choquet1954,Walley1991} with domain theory by considering the domain that is dual to the event domain of a basic space. The dual domain takes pairs of covering closed subsets of the basic space with reverse inclusion. Using this dual domain, we formulate a new family of credal sets as the invariant measure of an iterated function system (IFS) with probabilities. 

Iterated Function Systems (IFS) and their invariant measures have been extensively studied 
in fractal geometry and dynamical systems, with applications ranging from computer graphics 
and image compression to natural phenomena modelling, signal processing, biological structure 
analysis, and financial time series \cite{fernau2022iterated}. 
Our results in this section provide a mathematically sound foundation for extending these classical 
applications to settings where the probabilities are uncertain or partially specified.

\begin{definition}
    For a basic space $D$, the {\em dual event domain} $\mathbb{E}_C(D)$ consists of pairs of closed subsets $(C_1,C_2) $ of $D$ with $C_1\cup C_2=D$ ordered by reverse inclusion. 
\end{definition}

We immediately obtain the following relation between the event domain and its dual.
Let $S^c$ denote the complement of the set $S$. 
\begin{proposition}
   The event domain $\mathbb{E}(D)$ is isomorphic with its dual  $\mathbb{E}_C(D)$ with the isomorphism given by the map \[(O_1,O_2)\mapsto (O_2^c,O_1^c):\mathbb{E}(D)\to \mathbb{E}_C(D)\]
\end{proposition}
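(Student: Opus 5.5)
The plan is to show that the map $\phi:(O_1,O_2)\mapsto (O_2^c,O_1^c)$ is a well-defined bijection between the carriers of $\mathbb{E}(D)$ and $\mathbb{E}_C(D)$. We then show that both $\phi$ and its inverse preserve the order. An order-isomorphism is automatically an isomorphism of domains: it preserves all directed suprema and the way-below relation, and hence the Scott topology. So no separate continuity argument is needed.

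\emph{Well-definedness.} Let $(O_1,O_2)\in\mathbb{E}(D)$. Then $O_2^c$ and $O_1^c$ are closed. Disjointness $O_1\cap O_2=\emptyset$ is equivalent, by De Morgan, to
\[
O_2^c\cup O_1^c=(O_1\cap O_2)^c=D.
\]
Hence $\phi(O_1,O_2)\in\mathbb{E}_C(D)$.

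\emph{Bijectivity.} Define $\psi:(C_1,C_2)\mapsto (C_2^c,C_1^c)$. Its components are open sets, and $C_1\cup C_2=D$ gives $C_2^c\cap C_1^c=\emptyset$, so $\psi$ lands in $\mathbb{E}(D)$. Since complementation is an involution, the compositions are $\psi\circ\phi=\mathrm{id}$ and $\phi\circ\psi=\mathrm{id}$.

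\emph{Order.} Suppose $(O_1,O_2)\subseteqq(U_1,U_2)$ componentwise. Then $O_2\subseteq U_2$ gives $U_2^c\subseteq O_2^c$, and $O_1\subseteq U_1$ gives $U_1^c\subseteq O_1^c$. This says exactly that $\phi(O_1,O_2)\sqsubseteq\phi(U_1,U_2)$ in the reverse-inclusion order of $\mathbb{E}_C(D)$. The converse direction follows by the same computation applied to $\psi$.

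The only point needing care is a bookkeeping one: the coordinates are swapped, and the dual order is reverse inclusion. We must check that these two reversals combine so that $\phi$ is monotone rather than antitone, which the computation above confirms. There is no substantive obstacle. If desired, one can remark that $\mathbb{E}_C(D)$ therefore inherits the countably based bounded-complete domain structure of $\mathbb{E}(D)$ from the Proposition cited to \cite{edalat2002foundation}.
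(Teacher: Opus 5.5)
Your proposal is correct and is exactly the routine check the paper has in mind when it says the isomorphism is obtained ``immediately'' (it gives no further proof): De Morgan for well-definedness, involutivity of complement for bijectivity, and inclusion-reversal of complement cancelling the reverse-inclusion order for order-preservation in both directions. One small expository correction: the coordinate swap plays no role in monotonicity, since $(O_1,O_2)\mapsto(O_1^c,O_2^c)$ would also be an order-isomorphism. The two reversals that cancel are complementation and the reverse-inclusion order; the swap only arranges that $C_1=O_2^c\supseteq O_1$ is the closed outer approximation of the event.
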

\noindent
Given this isomorphism, we can deduce the next three results, namely Lemma~\ref{max-attaining}, Proposition~\ref{scott-cont-dual-domain}, and Corollary~\ref{upper-envelop}  using the densely injective properties of bounded complete domains as we did in the earlier sections of the paper. However, it is more instructive and more intuitive to derive them directly, as we shall do here.

Assume that the basic space $X$ is Hausdorff. Given $\nu\in P(D)$, we
obtain a mapping $\hat{\nu}: U(D)\to [0,1]$, where $U(D)$ is the upper
space of $D$, with $\hat{\nu}(C)=1-\nu(C^c)$. It is easy to check that
$\hat{\nu}$ is Scott continuous.

\begin{lemma}\label{max-attaining}
  For a given closed $C\subseteq X$, the evaluation map
  $\operatorname{ev}_C:P(X)\to \mathbb{R}$ with $\nu\mapsto \hat{\nu}(C)$
  attains its maximum on every compact subset of $U(P(D))$.
\end{lemma}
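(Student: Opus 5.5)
The plan is to show that $\operatorname{ev}_C:\nu\mapsto\hat{\nu}(C)=1-\nu(C^c)$ is upper semicontinuous on $P(X)$ with the weak topology, and then use the classical fact that an upper semicontinuous real function attains its maximum on a nonempty compact set. Here I read ``compact subset of $U(P(D))$'' as an element $K\in U(P(X))$, that is, a nonempty compact subset of $P(X)$.

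\textbf{Step 1 (semicontinuity).} Since $C$ is closed, $C^c$ is open. Earlier we saw that $\operatorname{ev}_{C^c}:P(X)\to([0,1],\leq)$, $\nu\mapsto\nu(C^c)$, is lower semicontinuous, since the weak topology is by definition the weakest one making these maps Scott continuous into $([0,1],\leq)$. Hence
\[
\operatorname{ev}_C=1-\operatorname{ev}_{C^c}
\]
is upper semicontinuous. Concretely, for each real $r$ the set $\{\nu:\hat{\nu}(C)<r\}=\{\nu:\nu(C^c)>1-r\}$ is weakly open.

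\textbf{Step 2 (attainment).} Let $K\subseteq P(X)$ be nonempty and compact, and put $s=\sup_{\nu\in K}\hat{\nu}(C)\leq 1$. Suppose $s$ is not attained. Then the open sets $W_r=\{\nu:\hat{\nu}(C)<r\}$ with $r<s$ cover $K$. By compactness finitely many of them already cover $K$, so $K\subseteq W_{r_0}$ for some $r_0<s$. This contradicts the definition of $s$ as the supremum, so some $\nu^*\in K$ satisfies $\hat{\nu^*}(C)=s$. Equivalently, the minimum of the lower semicontinuous map $\nu\mapsto\nu(C^c)$ over $K$ is attained, which is the same phenomenon behind $\widehat{\operatorname{ev}_O}(K)=\min\{\sigma(O):\sigma\in K\}$ used earlier.

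\textbf{Main obstacle.} The only delicate point is the topology on $P(X)$. Compactness of $K$ must be taken in the weak topology, which is the topology underlying $U(P(X))$. The semicontinuity of $\operatorname{ev}_{C^c}$ must also hold in that same topology; for non-Hausdorff $D$ this would be the Scott topology instead. Because the weak topology was defined exactly to make the $\operatorname{ev}_O$ lower semicontinuous, the two requirements match. Note also that upper semicontinuity, not full continuity, is all we get, since $\nu\mapsto\nu(C)$ is generally not continuous (compare the remark on Lawson continuity in the appendix). Semicontinuity in the correct direction is nonetheless exactly what maximisation needs.
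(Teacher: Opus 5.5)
Your proof is correct and follows the same route as the paper. Both arguments show that $\nu\mapsto\hat{\nu}(C)=1-\nu(C^c)$ is upper semicontinuous for the weak topology, and then use the fact that an upper semicontinuous map attains its supremum on a nonempty compact $K$. You additionally spell out the details the paper leaves implicit: lower semicontinuity of $\operatorname{ev}_{C^c}$ follows from the definition of the weak topology, and attainment follows from the finite-subcover argument.
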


\begin{proof}
  For a closed set $C\subset X$, the evaluation map
  $\operatorname{ev}_C:P(X)\to \mathbb{R}$ with
  $\nu\mapsto \hat{\nu}(C)$ is an upper semi-continuous (usc) map with
  respect to the weak topology on $P(X)$. Since an usc map attains its
  supremum on the compact set $K$, it follows that there exists
  $\mu\in K$ such that $\mu(C)=\max\{\hat{\nu}(C):\nu\in K\}$.
\end{proof}

In general, the evaluation map $\operatorname{ev}_C$ does not attain
its infimum on a compact set as can be illustrated with examples for
$X=[0,1]$. We will also see an example of IFS with probability for
which the set $\{\hat{\nu}(C):\nu\in K\}$ is not an interval.

For a compact $X$, compact set $A\in U(P(X))$, and a given closed set
$C\subseteq X$, we denote by $\nu^A_C\in A$ any element with
$\nu^A_C(C)=\max \{\nu(C): \nu\in A\}$. Let $\mathcal{K}(X)$ denote
the collection of closed (and thus compact) subsets of $X$, ordered by
reverse inclusion.
\begin{proposition}\label{scott-cont-dual-domain}
    For any $A\in U(P(X))$,
    The mapping
    \begin{align*}
      \operatorname{\nu}^A_{(-)}(-): \mathcal{K}(X) &\to
                                                      ([0,1],\geq)\\
      C &\mapsto \operatorname{\nu}^A_C(C)
    \end{align*}
is Scott continuous. 
\end{proposition}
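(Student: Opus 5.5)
Write $m(C):=\nu^A_C(C)=\max\{\hat\nu(C):\nu\in A\}$; by Lemma~\ref{max-attaining} this maximum exists for every closed $C$. The plan is to verify the two defining properties of Scott continuity for $m:\mathcal{K}(X)\to([0,1],\geq)$: monotonicity, and preservation of directed suprema. In $\mathcal{K}(X)$, ordered by reverse inclusion, a directed family $(C_i)_{i\in I}$ has supremum $C=\bigcap_i C_i$, and for closed subsets of compact $X$ this intersection is again closed and non-empty. In $([0,1],\geq)$ the supremum is the infimum of the reals. So we must show
\[
m\Big(\bigcap_i C_i\Big)=\inf_i m(C_i).
\]

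\textbf{Monotonicity.} Suppose $C\supseteq C'$. Then each $\nu\in A$ satisfies $\nu(C)\geq\nu(C')$. Taking maxima over $A$ gives $m(C)\geq m(C')$, which is exactly monotonicity in the reverse orders on both sides. Applied to $C\subseteq C_i$ for every $i$, this also gives the easy inequality $m(C)\leq\inf_i m(C_i)$.

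\textbf{Reverse inequality.} For each $i$, choose $\nu_i:=\nu^A_{C_i}\in A$, a maximiser, so that $\nu_i(C_i)=m(C_i)$. Let $r:=\inf_i m(C_i)$. The net $(\nu_i)$, indexed by the directed set $I$, lies in the compact set $A\subseteq P(X)$ with the weak topology. It therefore has a cluster point $\mu\in A$.

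Now fix $j$. For every $i\geq j$ we have $C_i\subseteq C_j$, hence
\[
\nu_i(C_j)\geq\nu_i(C_i)=m(C_i)\geq r.
\]
The map $\nu\mapsto\nu(C_j)$ is upper semicontinuous in the weak topology, so the set $\{\nu:\nu(C_j)\geq r\}$ is closed. The net is eventually in this set, so its cluster point $\mu$ is too: $\mu(C_j)\geq r$ for every $j$.

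Since the $C_j$ are directed downward, countable additivity (via the Borel extension of the valuation, as recalled in the introduction) together with $\tau$-smoothness of Radon measures on compact Hausdorff spaces gives
\[
\mu\Big(\bigcap_j C_j\Big)=\inf_j\mu(C_j)\geq r.
\]
Hence $m(C)\geq\mu(C)\geq r$, and combining with the easy inequality yields $m(C)=r$.

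\textbf{Main obstacle.} The step needing care is the passage $\mu(\bigcap_j C_j)=\inf_j\mu(C_j)$ for a possibly uncountable directed family. There are two ways to handle it. First, one can use second countability of $X$: the complements $C_j^c$ form an open cover of $C^c$, which has a countable subcover. This reduces the family to a countable decreasing cofinal sequence, and then ordinary continuity from above of the measure applies. Second, one can appeal directly to Scott continuity of the valuation $\mu$ on opens, applied to the directed union $\bigcup_j C_j^c=C^c$, which gives $\mu(C^c)=\sup_j\mu(C_j^c)$. This second route is the cleanest, since it uses only the definition $\hat\mu(C)=1-\mu(C^c)$ and the continuity of valuations built into the paper's framework.
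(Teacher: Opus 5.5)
Your proof is correct. It uses the same three ingredients as the paper's proof: upper semicontinuity of $\nu\mapsto\hat\nu(C)$ in the weak topology, compactness of $A$, and continuity from above of a single valuation (Scott continuity on opens applied to $\bigcup_j C_j^c=C^c$).

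The difference lies in how compactness is used. The paper fixes a threshold $r>\nu^A_C(C)$ and runs a Dini-type finite-subcover argument. The sets $U_i=\{\nu\in A:\hat\nu(C_i)<r\}$ are weakly open and increasing, and they cover $A$, using continuity from above for each $\nu$ separately. Hence a single $U_i$ is already all of $A$, which gives $\nu^A_{C_i}(C_i)<r$. You instead use the cluster-point form of compactness on a net of maximisers. You pass the bound $\hat\nu_i(C_j)\geq r$ to a cluster point $\mu$ through closedness of superlevel sets, and invoke continuity from above only for $\mu$.

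Your route has two advantages:
\begin{itemize}
\item It handles arbitrary directed families directly. The paper checks only decreasing sequences, so it tacitly relies on $\mathcal{K}(X)$ being countably based for that to suffice.
\item It yields the extra fact that every cluster point of maximisers for the $C_i$ is a maximiser for $C=\bigcap_i C_i$.
\end{itemize}
The paper's route, in turn, never needs to select maximisers along the family; it only needs $\sup_{\nu\in A}\hat\nu(C_i)\leq r$.

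One small imprecision: in your first way of handling the main obstacle, the sequence extracted via Lindel\"of need not be cofinal in the family. You only use that its intersection is $C$, which does hold, so the argument is unaffected.
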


\begin{proof}
  Let $C\supseteq C'$; then
  $\nu^A_C(C) \geq \nu^A_{C'}(C)\geq \nu^A_{C'}{C'}$, thus the mapping
  is monotone.  Let $C_0\supseteq C_1\supseteq C_2 \supseteq\cdots $
  be a decreasing sequence of closed subsets of $X$ with
  $C=\bigcap_{i\in \nat}C_i$. We will show that
  $\inf_{i\in\nat} \nu^A_{C_i}(C_i)=\nu^A_{C}(C)$. Let $r\in (0,1)$
  satisfy $r>\nu^A_C(C)$. Then we have $r>\nu^A_C(C)\geq \nu(C)$ for
  any $\nu\in A$ by the definition of
  $\nu^A_C$. Thus, \begin{equation}\label{all-m} \forall \nu\in A.\,
    \exists i(\nu)\in \nat.\,r>\nu(C_{i(\nu)})\end{equation}Since the
  map $\operatorname{ev}_{C_i}:X\to [0,1]$ is usc, it follows that the
  set $U_i=\{\nu\in X: r>\nu(C_i)\}$ is weak open in the subspace weak
  topology on $X$. We also have $U_i\subset U_{i+1}$ since
  $C_i\supseteq C_{i+1}$. By Expression~(\ref{all-m}), it follows that
  $\bigcup_{i\in \nat}U_i=X$. By the compactness of $X$, there exists
  $i\in \nat$ such that $U_{i}=X$. It follows that for all $\nu\in A$
  we have $r>\nu^A_{C_i} (C_{i})$. Since $r>\nu^A_C(C)$ was arbitrary,
  it follows that $\mu^A_C(C)\geq \inf_{i\in\nat} \mu^{A}_{C_i}(C_i)$
  and Scott continuity follows.
\end{proof}

Finally, we conclude:
\begin{corollary}\label{upper-envelop}
  The mapping
  \begin{align*}
    {\mathcal L}: \Big(U(P(X)), \mathbb{E}_C(X)\Big) &\to \mathbb{I}[0,1]\\
    (A,(C_1,C_2)) &\mapsto
                    \left[1-\nu^A_{C_1}({C_1}),\nu^A_{C_2}(C_2)\right],
  \end{align*}
   where $\nu^A_{C_i}(C_i)=\max\{\nu(C_i):\nu\in A\}$, for $i=1,2$,
   is Scott continuous. 
\end{corollary}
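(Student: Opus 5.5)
To prove Corollary~\ref{upper-envelop} I would reduce Scott continuity of $\mathcal{L}$ to separate Scott continuity in each argument. The domain $\mathbb{I}[0,1]$ is ordered by reverse inclusion, so a map into it is Scott continuous exactly when its left end point is Scott continuous into $([0,1],\leq)$ and its right end point is Scott continuous into $([0,1],\geq)$. For a product of dcpos, joint Scott continuity is equivalent to Scott continuity in each variable separately. It therefore suffices to show that $A\mapsto \nu^A_{C}(C)$ and $(C_1,C_2)\mapsto \nu^A_{C_i}(C_i)$ are Scott continuous into $([0,1],\geq)$; the left end point $1-\nu^A_{C_1}(C_1)$ then follows by composing with $t\mapsto 1-t$, which is an order isomorphism $([0,1],\geq)\to([0,1],\leq)$. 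Well-definedness, meaning that the max is attained and the left end point is at most the right end point, comes from Lemma~\ref{max-attaining}. I would also check that $1-\max_\nu \nu(C_1)\leq \max_\nu\nu(C_2)$: since $C_1\cup C_2=X$, we have $\nu(C_1)+\nu(C_2)\geq 1$ for every $\nu$, and so $\nu^A_{C_2}(C_2)\geq \nu^A_{C_1}(C_2)\geq 1-\nu^A_{C_1}(C_1)$.

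\textbf{Continuity in the event argument.} A directed set in $\mathbb{E}_C(X)$ is a filtered family of pairs $(C_1^j,C_2^j)$ under reverse inclusion, with supremum the componentwise intersection $(\bigcap_j C_1^j,\bigcap_j C_2^j)$. For each $A$ fixed, Proposition~\ref{scott-cont-dual-domain} gives Scott continuity of $C\mapsto\nu^A_C(C)$ from $\mathcal{K}(X)$ to $([0,1],\geq)$. Applying it componentwise handles this variable, since the projections $\mathbb{E}_C(X)\to\mathcal{K}(X)$ are Scott continuous. That proposition is stated for decreasing sequences, so I would either invoke countable basedness of $X$ or repeat its compactness argument verbatim for filtered families; the argument works as written, because the sets $U_j=\{\nu:r>\nu(C^j)\}$ form a directed open cover of the compact set $A$.

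\textbf{Continuity in the credal argument.} Fix a closed $C$. Monotonicity is immediate: if $A\supseteq A'$ then $\max_{A}\nu(C)\geq\max_{A'}\nu(C)$. For a filtered family $A_j$ with $A=\bigcap_j A_j$, take $r>\max_{\nu\in A}\nu(C)$. The set $W=\{\nu: \nu(C)<r\}$ is weakly open because $\nu\mapsto\nu(C)$ is upper semicontinuous, and it contains $A$. Since $P(X)$ is compact Hausdorff and the $A_j$ are compact and filtered with intersection $A$, some $A_j\subseteq W$, which gives $\max_{A_j}\nu(C)<r$. Letting $r$ decrease yields $\inf_j\max_{A_j}\nu(C)=\max_A\nu(C)$. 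Equivalently, $W$ is a Scott open neighbourhood of $A$ in $U(P(X))$, which is just the upper-space topology.

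The main obstacle is the bookkeeping about which order each end point lives in, together with the careful extension of Proposition~\ref{scott-cont-dual-domain} from sequences to directed families. The compactness step with upper semicontinuous evaluations is the only genuine analytic input, and it is already supplied by Lemma~\ref{max-attaining} and Proposition~\ref{scott-cont-dual-domain}.
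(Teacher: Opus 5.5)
Your proposal is correct and follows the paper's route: the paper simply concludes the corollary from Lemma~\ref{max-attaining} (the maximum is attained) and Proposition~\ref{scott-cont-dual-domain} (continuity in the closed-set argument for fixed $A$), which is exactly how you handle the event argument. What you add is precisely what the paper omits: the reduction to separate continuity, the compactness argument for continuity in the credal argument $A$ (Proposition~\ref{scott-cont-dual-domain} does not cover it and the paper leaves it implicit), the passage from sequences to directed families (the pointwise step $\inf_j\nu(C^j)=\nu(\bigcap_j C^j)$ there uses that $\nu$ is a continuous valuation, or you can reduce to sequences via countable basedness), and the check $1-\nu^A_{C_1}(C_1)\leq\nu^A_{C_2}(C_2)$.
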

Corollary~\ref{upper-envelop} thus provides a domain-theoretic realisation of the capacity-theoretic envelope: the map \(\mathcal{L}\) assigns to each credal set \(A\) and closed event pair \((C_1, C_2)\) the precise interval \[[1 - v_{C_1}^A(C_1), v_{C_2}^A(C_2)],\] where the upper probabilities \(v_{C_i}^A\) are defined via maximisation over \(A\). This construction captures the classical notion of upper and lower probabilities as envelopes of a credal set \cite{Walley1991,Augustin2014} within a domain-theoretic framework. The Scott continuity of \(\mathcal{L}\) ensures that this envelope can be approximated from finite or partial information---a crucial property for computational implementations of imprecise probabilistic reasoning. Moreover, by framing the result in terms of the dual event domain \(\mathbb{E}_C(X)\), we maintain a duality with the earlier open-set event domain \(\mathbb{E}(X)\), thereby unifying open-set and closed-set representations of uncertainty under a single domain-theoretic roof.

In what follows we consider credal sets of IFSs. The case of Markov chains having imprecise input given as intervals is presented in Appendix~\ref{appendix:interval_markov}.

\subsection{Credal sets of IFSs with probabilities}

In this subsection, we introduce a new family of credal sets, namely those consisting of invariant measures of iterated function systems (IFSs) with probabilities. The theory of IFSs has been an active area of research in multiple disciplines.

An iterated function system with probabilities is given by a finite
set of contracting maps $f_i:X\to X$ on a complete metric space $X$,
each associated with a probability weight $p_i$ with
$\sum_{i=1}^np_i=1$. We write $f$ for the vector function $(f_i)_i$
and $p$ for the probability vector $(p_i)_i$. Hutchinson showed that
such a system has a unique invariant measure which is the fixed point
of the Markov operator $H_{f,p}:P(X)\to P(X)$ on the space $P(X) $ of
probability measures on $X$, which is given by:
\[H_{f,p}(\mu)=\sum_{i=1}^n p_i\cdot \mu\circ f_i^{-1}.\]

The support of the invariant probability measure lies on the compact set $R$ that satisfies the fixed point equation $R=\bigcup_{i=1}^nf_i[R]$. Since the support of the invariant measure $\operatorname{fix}(H_{f,p})$ lies in the compact set $R$, we can assume without loss of generality that $X$ is itself a compact metric space.

Denote this unique fixed point as $\operatorname{fix}(H_{f,p})$. If the components of the probability vector are interval valued with $p\in \prod_{1\leq i\leq n} [p_i^-,p_i^+]$, where $n>1$, the admissible set, assumed to be non-empty, is defined as:
\begin{equation}
\label{eq:A_admissible_p_i}
K=S^{n-1}\cap \prod_{1\leq i\leq n} [p_i^-,p_i^+],
\end{equation}
where $S^{n-1}=\{p\in [0,1]^n: \sum_{i=1}^np_i=1\}$ is the $n-1$-dimensional simplex. Since $S^{n-1}$ and $\prod_{1\leq i\leq n} [p_i^-,p_i^+]$ are convex polytopes, so is their intersection $K$.

\begin{theorem}~\cite{centore1994continuity}\label{IFS-fix}
For a fixed vector function $f$,  the map $\operatorname{fix}(H_{f,-}):K\to P(X)$ with $p\mapsto \operatorname{fix}(H_{f,p})$ is continuous with respect to the Euclidean subspace topology on $S^{n-1}$ and the weak topology on $P(X)$. 
    
\end{theorem}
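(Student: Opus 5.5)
The plan is to view $p\mapsto \operatorname{fix}(H_{f,p})$ as the fixed point of a family of contractions that depends continuously on a parameter. The family will be contracting uniformly in that parameter. I will use the Hutchinson (Monge--Kantorovich) metric on $P(X)$:
\[
d_H(\mu,\nu)=\sup\Big\{\Big|\int g\,d\mu-\int g\,d\nu\Big| : g:X\to\mathbb{R},\ \operatorname{Lip}(g)\le 1\Big\}.
\]
By the reduction above we may take $X$ compact, and then $d_H$ metrises the weak topology on $P(X)$. So it suffices to prove continuity of $p\mapsto\operatorname{fix}(H_{f,p})$ from $K$ (Euclidean topology) into $(P(X),d_H)$.

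\textbf{Step 1: uniform contraction.} Let $s=\max_i s_i<1$ be the largest contraction factor of the $f_i$. For a $1$-Lipschitz $g$ we have
\[
\int g\,dH_{f,p}\mu-\int g\,dH_{f,p}\nu=\sum_i p_i\Big(\int g\circ f_i\,d\mu-\int g\circ f_i\,d\nu\Big).
\]
Each $g\circ f_i$ is $s$-Lipschitz. Hence $d_H(H_{f,p}\mu,H_{f,p}\nu)\le s\,d_H(\mu,\nu)$, with the same constant $s$ for every $p\in K$ (indeed every $p\in S^{n-1}$). Since $(P(X),d_H)$ is complete, Banach's theorem gives the unique fixed point $\mu_p:=\operatorname{fix}(H_{f,p})$.

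\textbf{Step 2: continuity in the parameter.} Fix $\mu\in P(X)$ and $p,q\in K$. Then
\[
\int g\,dH_{f,p}\mu-\int g\,dH_{f,q}\mu=\sum_i(p_i-q_i)\int g\circ f_i\,d\mu .
\]
Because $\sum_i(p_i-q_i)=0$, we may replace $g$ by $g-g(x_0)$ for a fixed $x_0$, so that $|g|\le \operatorname{diam}(X)$. This gives $d_H(H_{f,p}\mu,H_{f,q}\mu)\le \operatorname{diam}(X)\,\|p-q\|_1$, uniformly in $\mu$.

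\textbf{Step 3: the standard fixed-point estimate.} Using Steps 1 and 2,
\[
d_H(\mu_p,\mu_q)=d_H(H_{f,p}\mu_p,H_{f,q}\mu_q)\le d_H(H_{f,p}\mu_p,H_{f,q}\mu_p)+d_H(H_{f,q}\mu_p,H_{f,q}\mu_q)\le \operatorname{diam}(X)\|p-q\|_1+s\,d_H(\mu_p,\mu_q).
\]
Rearranging yields $d_H(\mu_p,\mu_q)\le \operatorname{diam}(X)\|p-q\|_1/(1-s)$. So the map is in fact Lipschitz on $K$, which gives the claimed continuity.

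\textbf{Main obstacle.} The main point needing care is the justification that $d_H$ is complete and induces the weak topology on $P(X)$ for compact metric $X$. This is classical (Hutchinson; Kantorovich--Rubinstein duality), and I would cite it rather than reprove it. If one prefers to avoid it, an alternative is a weak-compactness argument. Take $p_k\to p$; by Prokhorov/compactness of $P(X)$ any subsequence of $\mu_{p_k}$ has a weak limit $\nu$. Passing to the limit in $\mu_{p_k}=\sum_i p_{k,i}\,\mu_{p_k}\circ f_i^{-1}$, using continuity of each $f_i$, shows that $\nu$ is $H_{f,p}$-invariant. Uniqueness of the fixed point then gives $\nu=\mu_p$.
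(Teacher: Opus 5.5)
Your proof is correct. The paper gives no argument of its own and simply cites Centore--Vrscay, and your route is essentially the standard argument behind that result: uniform contractivity of $H_{f,p}$ in the Hutchinson metric, followed by the perturbation estimate $d_H(\mu_p,\mu_q)\le \operatorname{diam}(X)\|p-q\|_1/(1-s)$, which even gives Lipschitz dependence on $p$. The one step you leave implicit is that the paper's weak topology, generated by lower semicontinuity of the maps $\operatorname{ev}_O$, coincides on $P(X)$ with the narrow topology metrised by $d_H$; this holds for metric $X$ by the Portmanteau theorem.
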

\begin{lemma}\label{shrinking}
    If $g:X\to Y$ is a continuous map of Hausdorff spaces $X$ and $Y$ then for any shrinking sequences of compact non-empty sets $K_i$, $i\in \nat$, we have $g[\bigcap_{i\in \nat} K_i]=\bigcap_{i\in \nat} g[K_i]$.
\end{lemma}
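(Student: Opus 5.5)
The equality will follow from two inclusions, with compactness and the Hausdorff property doing the work only in the reverse one. Throughout, ``shrinking'' means $K_{i+1}\subseteq K_i$.

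The inclusion $g[\bigcap_{i} K_i]\subseteq\bigcap_{i} g[K_i]$ is immediate. If $x\in\bigcap_i K_i$, then $x\in K_i$ for every $i$, and hence $g(x)\in g[K_i]$ for every $i$.

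For the reverse inclusion, fix $y\in\bigcap_i g[K_i]$ and set $L_i:=K_i\cap g^{-1}(\{y\})$. Two facts make the $L_i$ usable.
\begin{itemize}
\item Each $L_i$ is non-empty, because $y\in g[K_i]$ gives a point of $K_i$ mapped to $y$.
\item Each $L_i$ is compact. Since $Y$ is Hausdorff, the singleton $\{y\}$ is closed, so $g^{-1}(\{y\})$ is closed in $X$ by continuity of $g$. A closed subset of the compact set $K_i$ is compact.
\end{itemize}
The sets $L_i$ are also decreasing in $i$, because the $K_i$ are.

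It remains to show $\bigcap_i L_i\neq\emptyset$, which is the only real step. Every $L_i$ is contained in the compact set $L_0$, and in the Hausdorff space $X$ each $L_i$ is closed, so each $L_i$ is closed in $L_0$. A decreasing sequence of non-empty closed subsets of $L_0$ has the finite intersection property. By compactness of $L_0$ its intersection is therefore non-empty. Concretely, if the intersection were empty, the complements $L_0\setminus L_i$ would form an open cover of $L_0$ with no finite subcover, which is impossible.

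Now pick $x\in\bigcap_i L_i$. Then $x\in\bigcap_i K_i$ and $g(x)=y$, so $y\in g[\bigcap_i K_i]$. This gives the reverse inclusion and hence the equality.

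The Hausdorff hypotheses are used twice: for $Y$, to make $g^{-1}(\{y\})$ closed, and for $X$, to make the compact sets $L_i$ closed. Non-emptiness of each $K_i$ is not needed for the argument, but it ensures that both sides of the equality are non-empty.
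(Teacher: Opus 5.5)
The paper states this lemma without proof and uses it directly in the next corollary. Your argument is correct and is the standard one: you cut each $K_i$ down to the fibre piece $L_i=K_i\cap g^{-1}(\{y\})$, then apply the finite intersection property to the decreasing non-empty closed sets $L_i$ inside the compact set $L_0$. Your accounting of the hypotheses is also right. Only $T_1$ is needed for $Y$, whereas the Hausdorff property of $X$ is essential: take $g$ constant on $\nat$ with the cofinite topology and $K_n=\{m: m\geq n\}$; then the left side is empty and the right side is not. The same proof works verbatim for filtered families of compact sets, not just sequences, which matches directed suprema in $U(K)$ directly.
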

\begin{corollary}
  The map
  \begin{align*}
    \operatorname{fix}(H_{f,[-]}): U(K) &\to U(P(X))\\
    A &\mapsto\operatorname{fix}(H_{f,[A]})=\{ \operatorname{fix}(H_{f,p}):p\in A\}
  \end{align*}
  is well-defined and Scott continuous.
\end{corollary}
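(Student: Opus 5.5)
We have to prove two things: the map $A\mapsto\operatorname{fix}(H_{f,[A]})$ lands in $U(P(X))$, and it is Scott continuous from $U(K)$ to $U(P(X))$. Write $\phi:=\operatorname{fix}(H_{f,-}):K\to P(X)$.

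\textbf{Well-definedness.} Let $A\in U(K)$, so $A$ is a non-empty compact subset of $K$. By Theorem~\ref{IFS-fix}, $\phi$ is continuous from the Euclidean topology on $K$ to the weak topology on $P(X)$. Hence $\phi[A]$ is a non-empty compact subset of $P(X)$.

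We may take $X$ to be a compact metric space, as discussed before the statement. Then $P(X)$ with the weak topology is a compact metrizable space, in particular Hausdorff. So $\phi[A]$ is a genuine element of the upper space $U(P(X))$.

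\textbf{Monotonicity.} The order on both upper spaces is reverse inclusion. If $A\sqsubseteq A'$, i.e.\ $A'\subseteq A$, then $\phi[A']\subseteq\phi[A]$, i.e.\ $\phi[A]\sqsubseteq\phi[A']$.

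\textbf{Preservation of directed suprema.} Both $U(K)$ and $U(P(X))$ are countably based: $K$ is a compact polytope in $\mathbb{R}^n$, and $P(X)$ is compact metrizable. So it is enough to handle increasing chains in $U(K)$, which are shrinking sequences $A_0\supseteq A_1\supseteq\cdots$ of non-empty compact sets. Their supremum is $\bigcap_i A_i$, which is non-empty by compactness.

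Apply Lemma~\ref{shrinking} with $g=\phi$. Both $K$ and $P(X)$ are Hausdorff, so the lemma gives
\[\phi\Big[\bigcap_i A_i\Big]=\bigcap_i\phi[A_i].\]
The right-hand side is exactly the supremum in $U(P(X))$ of the chain $\phi[A_i]$. Hence $\phi[-]$ preserves suprema of chains, and together with monotonicity this gives Scott continuity.

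To be safe with general directed sets, I would run the same argument for filtered families of compact sets. This needs the filtered-family version of Lemma~\ref{shrinking}. The inclusion $\phi[\bigcap_i A_i]\subseteq\bigcap_i\phi[A_i]$ is immediate.

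\textbf{Main obstacle: the reverse inclusion in Lemma~\ref{shrinking}.} Take $y\in\bigcap_i\phi[A_i]$. The sets $A_i\cap\phi^{-1}(y)$ are non-empty, because $y\in\phi[A_i]$. They are closed in the compact set $A_i$, because $\{y\}$ is closed in the Hausdorff space $P(X)$ and $\phi$ is continuous. They are also filtered, since the $A_i$ are. By the finite intersection property they have a common point $p\in\bigcap_i A_i$ with $\phi(p)=y$, so $y\in\phi[\bigcap_i A_i]$.

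The key requirement is Hausdorffness of the weak topology on $P(X)$, which holds for compact metric $X$. This is also what makes the intersection of the compact images equal the supremum in $U(P(X))$.
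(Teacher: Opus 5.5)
Your proposal is correct and takes the same route as the paper. Compactness of $\{\operatorname{fix}(H_{f,p}):p\in A\}$ comes from the continuity in Theorem~\ref{IFS-fix}, and Scott continuity comes from Lemma~\ref{shrinking}. You also supply details the paper leaves implicit: Hausdorffness of the weak topology on $P(X)$, the passage from chains to general directed families, and a finite-intersection-property proof of Lemma~\ref{shrinking} itself. All of these are sound.
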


\begin{proof}
   Since the continuous image of a compact set is compact, it follows by Theorem~\ref{IFS-fix} that the set \[\{ \operatorname{fix}(H_{f,p}):p\in A\}\subseteq P(X)\] is compact and hence in $U(P(X))$. By Lemma~\ref{shrinking}, we conclude that the map $\operatorname{fix}(H_{f,[-]})$ is Scott continuous.
\end{proof}

The continuous probability valuation corresponding to the invariant measure can be obtained domain-theoretically as the least fixed point of the Hutchinson operator on the probabilistic power domain of the upper space of $X$ if $X$ is compact or, the least fixed point of the Hutchinson operator on the probabilistic power domain of the domain of formal balls of the complete metric space. 

And we obtain an extension of the fixed-point operator to the admissible set:
\[\operatorname{fix}(A)=\{\operatorname{fix}(H_{f,p}):p\in A\} \subseteq P(X).\]

 We note that the image of $\operatorname{ev}_C:K\to [0,1]$ may not be an interval. For example, consider the IFS with two maps on $[0,1]$ that induces the classical Cantor set: $f_1:x\mapsto x/3$ and $f_2:x\mapsto 2/3+x/3$. Let $p_1\in[0,r]$ and $p_2\in [1-r,1]$ with $0<r\leq 1/2$. Then it is easy to see that the invariant measure corresponding to $p_1=0$ and $p_2=1$ is the Dirac measure $\delta$, whereas for $p_1=r$ and $p_2=1-r$, we obtain a non-atomic invariant measure $\nu$. Thus, for the closed set $C:=\{1\}$, the image of   $\operatorname{ev}_{\{1\}}:K\to [0,1]$ will be $\{0,1\}$.

Any subset $V_0$ of the vertices $V$ of $A$ gives, for any closed set $C\subset X$, a set of points:
\begin{equation*}
\{\operatorname{fix} (H_{f,p})(C):p\in V_0\}\subseteq\{\operatorname{fix} (H_{f,p})(C):p\in A\}\subseteq [0,\mu^A_C(C)].    
\end{equation*}

\noindent
Using the general result in Corollary~\ref{upper-envelop}, we can now deduce:

\begin{corollary}
  The composition
  \begin{align*}
    \mathcal{L}\circ \operatorname{fix}(H_{f,[-]}): U(K) &\to U(P(X))\to
    \mathbb{E}_C(X)\to \mathbb{I}[0,1] \\
    A &\mapsto \mathcal{L}(\operatorname{fix}H_{f,[A]})
  \end{align*}
  is Scott continuous.
\end{corollary}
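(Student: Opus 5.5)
The plan is to obtain the statement as a composition of Scott continuous maps, using the preceding corollary (Scott continuity of $\operatorname{fix}(H_{f,[-]})$) and Corollary~\ref{upper-envelop} (Scott continuity of $\mathcal{L}$). The only care needed is in how the typing $U(K)\to U(P(X))\to \mathbb{E}_C(X)\to \mathbb{I}[0,1]$ is read. I will read it in curried form. For each $A\in U(K)$, $\mathcal{L}(\operatorname{fix}H_{f,[A]})$ denotes the map $(C_1,C_2)\mapsto \mathcal{L}(\operatorname{fix}H_{f,[A]},(C_1,C_2))$ in the function space $[\mathbb{E}_C(X)\to \mathbb{I}[0,1]]$, ordered pointwise.

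\emph{Step 1: currying.} By Corollary~\ref{upper-envelop}, $\mathcal{L}: U(P(X))\times \mathbb{E}_C(X)\to \mathbb{I}[0,1]$ is Scott continuous. Here $\mathbb{E}_C(X)\cong \mathbb{E}(X)$ via the isomorphism $(O_1,O_2)\mapsto(O_2^c,O_1^c)$, so both factors are continuous dcpos. A map on a product of dcpos is Scott continuous iff it is Scott continuous in each variable separately. The curried map $\Lambda\mathcal{L}: U(P(X))\to [\mathbb{E}_C(X)\to\mathbb{I}[0,1]]$ is therefore well defined: each section $\mathcal{L}(A,-)$ is Scott continuous. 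The map $\Lambda\mathcal{L}$ is also Scott continuous, because directed suprema in the function space are computed pointwise and $\mathcal{L}$ is continuous in its first argument. I will also note that the statement of Corollary~\ref{upper-envelop} itself relies on Proposition~\ref{scott-cont-dual-domain} and its dual in the $A$-variable. Continuity in $A$ is the analogue of Lemma~\ref{shrinking}: for a shrinking sequence $A_i$ of compact credal sets with intersection $A$, we have $\max_{\nu\in A_i}\nu(C)\downarrow\max_{\nu\in A}\nu(C)$, by upper semicontinuity of $\nu\mapsto\nu(C)$ (Lemma~\ref{max-attaining}) together with a compactness argument. I take this as given from the corollary.

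\emph{Step 2: composition.} The preceding corollary gives Scott continuity of $\operatorname{fix}(H_{f,[-]}):U(K)\to U(P(X))$. The composite $\Lambda\mathcal{L}\circ \operatorname{fix}(H_{f,[-]})$ is then a composition of Scott continuous maps between dcpos, hence Scott continuous. Uncurrying gives the equivalent joint statement: $(A,(C_1,C_2))\mapsto \mathcal{L}(\operatorname{fix}H_{f,[A]},(C_1,C_2))$ is Scott continuous on $U(K)\times\mathbb{E}_C(X)$.

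\emph{Main obstacle.} The one point needing care is that the endpoints in $\mathcal{L}$ are maxima, which requires each $\operatorname{fix}H_{f,[A]}$ to be a genuine compact element of $U(P(X))$. This holds by Theorem~\ref{IFS-fix}: $A$ is compact and $p\mapsto\operatorname{fix}(H_{f,p})$ is continuous, so the image is compact. The maxima are then attained by Lemma~\ref{max-attaining}. The remaining point is to check that the order on $\mathbb{E}_C(X)$ (reverse inclusion) matches the order for which Proposition~\ref{scott-cont-dual-domain} gives continuity. Shrinking $C_1$ raises $1-\nu^A_{C_1}(C_1)$, and shrinking $C_2$ lowers $\nu^A_{C_2}(C_2)$, so the interval shrinks as the pair increases in $\mathbb{E}_C(X)$. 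This is exactly the monotonicity and continuity into $\mathbb{I}[0,1]$ under reverse inclusion. With this verified, no further computation is needed.
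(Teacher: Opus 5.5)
Your proposal is correct and follows the paper's route: the corollary is obtained as the composite of the Scott continuous map $\operatorname{fix}(H_{f,[-]})$ from the preceding corollary with the Scott continuous envelope map $\mathcal{L}$ of Corollary~\ref{upper-envelop}. Your explicit currying into $[\mathbb{E}_C(X)\to\mathbb{I}[0,1]]$ and your monotonicity check under reverse inclusion spell out what the paper leaves implicit in the phrase ``using the general result in Corollary~\ref{upper-envelop}''.
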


Thus, given an event represented by the covering closed sets $(C_1,C_2)$, the  map $\mathcal{L}\circ \operatorname{fix}(H_{f,[-]})$ gives an outer (envelope) approximation of the set of continuous valuations $\{\operatorname{fix}(H_{f,p}): p\in A\} $ from below by $1-\mu_{C_1}^A(C_1)$ and from above by $\mu_{C_2}^A(C_2)$. 

\subsubsection{Markov chains with imprecise transition matrix} In this section, we have introduced and analysed iterated function systems with imprecise probability weights. 
Analogously, Appendix~\ref{appendix:interval_markov} provides, to our knowledge, a novel treatment of 
finite-state Markov chains whose transition probabilities are specified with imprecise values.

\section*{Conclusion}

We have developed a comprehensive domain-theoretic foundation for imprecise probability and credal sets, providing a unified computational framework that handles both partial event descriptions (represented as pairs of disjoint opens in the event domain $\mathbb{E}(D)$) and distributional uncertainty captured by compact credal sets in the upper space $U(P(X))$. Our main contributions include: the construction of Scott-continuous interval probability maps on event domains; a monotonicity-derived, sharp interval extension of Bayesian updating with accompanying sound and complete inference rules; a theory of conditional independence for imprecise events with both conservative (Fr{\'e}chet) and strong factorisation rules; and the introduction of a novel family of credal sets generated by iterated function systems with imprecise probability weights, with a proof of continuity for the associated fixed-point map. We have formulated logical predicates for the relevant concepts, and derived soundness and completeness results. All operations extend Scott-continuously to spaces of credal sets, ensuring that finite approximations converge.

This work establishes the necessary mathematical infrastructure for a domain-theoretic treatment of credal networks and imprecise Bayesian networks. The framework guarantees that inference with imprecise parameters and partially specified observations is computationally grounded, supported by approximation properties from domain theory and a logical foundation via approximable mappings. Future work will focus on constructing explicit domain-theoretic credal networks, developing exact and approximate inference algorithms that leverage the continuity and approximation structure presented here, and extending the approach to sequential decision making under imprecision.



\newcommand{\etalchar}[1]{$^{#1}$}

\appendix

\section{A continuous valuation that is not Lawson continuous}

\label{appendix:not_Lawson_cont}

 Recall that the Lawson topology on a domain is the join of the Scott topology and the lower topology with a sub-basis of open sets given by $D\setminus \uparrow x$. It follows that a Scott continuous function $f:A\to B$ on domains $A$ and $B$ is Lawson continuous if and only if it preserves the infimum of any filtered family of Scott open sets. 

In the example below, we construct a continuous valuation on the unit interval which is not Lawson continuous.

\begin{example}Let $0<r<1$ and consider the construction of a fat Cantor set with support on a fractal set of Lebesgue measure $1-r$ as follows. The scheme of constructing the standard Cantor set in $[0,1]$ is mimicked, but the length of each removed open interval is scaled by $r$. Thus, in the first iteration the middle open interval of length $r/3$ is removed, in the second iteration two middle open intervals each of length $r/3^2$ are removed and in the $n$th iteration we remove $2^n$ middle intervals each of length $r/3^{n+1}$.The resulting Cantor set $C$ has Lebesgue measure $1-r>0$. If $s$ is the continuous valuation corresponding to this fractal measure on $[0,1]$, then it is easy to see that it does not preserve infimum of filtered collection of open sets. Let $O_n$ be the union of the interiors of the closed intervals that remain at iteration $n$ after removing the $2^n$ middle open intervals. Then $s(O_n)=1-r  + r(2/3)^{n}$ for $n\in \nat$ and the family $(O_n)_{n\in \nat}$ is filtered. However $\bigwedge_{n\in \nat} O_n=\emptyset$ since $\bigcap_{n\in \nat} O_n=C\setminus B$, where $B$
is the countable set of the end points of the removed intervals at all rounds of iteration. Thus, \[\inf_{n\in \nat} s(O_n)=1-r\neq 0=s((\emptyset)=s\left(\bigwedge_{n\in \nat}O_n\right).\]
\end{example}

 \section{Interval Based Markov Chains}

\label{appendix:interval_markov}
 
Consider a finite discrete space of $n$ elements indexed by integers $1\leq i\leq n$. We assume that the Markov transition matrix $T=(t_{ki})$ has imprecise entries given as intervals $[p_{ki},q_{ki}]$, with $1\leq k,i\leq n$ and that any admissible probability matrix satisfying these conditions is aperiodic and irreducible. In the literature on statistics with imprecise input, the problem is tackled with iteration and linear programming. We propose a simpler technique, based on finding the minimum and maximum of each component of the fixed point of the transition matrix over all the vertices of the polytope of admissible matrices. The method in general requires bilinear programming, but it gives a complete answer for $n=2$ and an inner approximate to the solution set which is feasible to compute, for small $n$. 

We first observe that the admissible set $A$ of transition matrices, 
\begin{equation}
\label{eq:A_admissible_t_k_i}
A:=\left\{t_{ki}:p_{ki}\leq t_{ki}\leq q_{ki}, \sum_{i=1}^nt_{ki}=1,1\leq k,i\leq n\right\}\subseteq\mathbb{R}^{n\times n}.     
\end{equation}
which we assume is non-empty, is a convex set since the intersection of convex sets is convex and the $n$-dimensional simplex $ S^{n-1}:=\{t_{ki}:\sum_{i=1}^nt_{ki}=1\}$ for $1\leq k\leq n$ is convex, as is each semi-space 
\[\{t_{ki}: p_{ki}\leq t_{ki} \} \qquad \mbox{or} \qquad \{t_{ki}: t_{ki}\leq q_{ki} \}\]
for $1\leq k,i\leq n$.
In addition, $A$ is compact since $S^{n-1}$ is compact and the semi-spaces are closed. Let $T^j$ for $j\in J$ be the finite set of vertices of $A$. Let $\pi^T$ for $T\in A$ be the left fixed point of $T$:
\[\pi^TT=\pi^T\]
Then, for $1\leq k\leq n$, we have 
\[m_k=\min \{(\pi^T)_k: T\in A\}\leq \min\{(\pi^{T^j})_k:j\in J\}:=m^J_k\]
\[M_k^J:=\max\{(\pi^{T^j})_k:j\in J\}\leq \max \{(\pi^T)_k: T\in A\}:=M_k\]
Since each matrix in $A$ is assumed to be aperiodic and irreducible, the mapping $\pi_k:A\to \mathbb{R}$ with $\pi_k(T)=\pi_k^T$ is continuous for $1\leq k\leq n$. It follows that $\pi_k$ maps the compact and convex set $A$ 
onto a compact none-empty interval of $\mathbb{R}$. Thus, for any subset $I\subset J$, we have the approximating interval 
\[\left[m^I_k,M^I_k\right]\subseteq \left[m^J_k,M^J_k\right]\subseteq [m_k,M_k]\]
For $n=2$, we can actually obtain the interval $[m_k,M_k]$ easily. We can write the matrix $T$ and its left fixed point as $(x,1-x)$ and
\[T=\begin{bmatrix}
    t_{11}&1-t_{11}\\
    t_{21}&1-t_{21}\\
\end{bmatrix}\]
This gives $x=t_{21}/(1-t_{11}+t_{21})$ and we have both $\partial x/\partial t_{11},\partial x/\partial t_{21}>0$. Hence, the minimum is attained for $t_{11}=p_{11}$ and $t_{21}=p_{21}$, while the maximum is attained for $t_{11}=q_{11}$ and $t_{21}=q_{21}$.

Here are the ingredients of the bilinear programming problem for a general solution to the problem: Minimise $\pi_k$ subject to
\[\begin{array}{llr}
\pi_i - \sum_{k=1}^n \pi_k t_{ki} = 0 & \text{for } i = 1, \dots, n &\text{(Flow/Invariance)} \\
 \sum_{k=1}^n \pi_k = 1& & \text{(Normalisation)} \\
 p_{ki} \le t_{ki} \le q_{ki} & \text{for all } k, i \in \{1, \dots, n\} & \text{(Input Bounds)} \\
 \sum_{i=1}^n t_{ki} = 1 & \text{for } k = 1, \dots, n &\text{(Row Stochasticity)} \\
 \pi_i \ge 0 & \text{for } i = 1, \dots, n & \text{(Non-negativity)}
\end{array}
\]

\end{document}